\title{Game Semantics for Higher-Order Unitary Quantum Computation}
\author{Samson Abramsky\thanks{
S.Abramsky@cs.ucl.ac.uk \\
University College London, Computer Science, Gower Street, London WC1E 6BT
United Kingdom.} and Radha Jagadeesan
\thanks{rjagadee@depaul.edu \\
School of Computing, 243 S Wabash, Chicago, IL 60540, USA.
}}
\begin{document}
\maketitle
\newcommand{\icoh}{\smile}

\newcommand{\id}{{\tt id}}
\newcommand{\twist}{\tt twist}
\newcommand{\lltwist}{\tt twist_{\llwith}}
\newcommand{\ptwist}{\tt twist_{\lplus}}
\newcommand{\tensor}{\bigotimes}
\newcommand{\ltensor}{\otimes}
\newcommand{\linimpl}{\multimap}
\newcommand{\parc}{\bigparr}
\newcommand{\bp}[1]{\left( #1 \right)}
\newcommand{\adj}[1]{#1^{\dag}}

\newcommand{\llwith}{\&}

\newcommand{\Games}{{\cal G}}
\newcommand{\agame}{G}
\newcommand{\bgame}{H}
\newcommand{\cgame}{K}

\newcommand{\moves}{M}

\newcommand{\lpo}{\sqsubseteq}
\newcommand{\posn}{P}
\newcommand{\maximal}[1]{#1^m}
\newcommand{\posnmax}{\posn{}^m}
\newcommand{\win}{W}
\newcommand{\PO}{{}^{{\small PO}}_{{\small QA}}\lambda}

\newcommand{\dom}[1]{ D_{#1}}
\newcommand{\od}{{\tt D^O}}
\newcommand{\pd}{{\tt D^P}}
\newcommand{\ad}{D_{\agame}}
\newcommand{\bd}{D_{\bgame}}
\newcommand{\cd}{D_{\cgame}}
\newcommand{\aod}{D{}^O_{\agame}}
\newcommand{\aodp}{D{}^O_{\agame'}}
\newcommand{\apd}{D{}^P_{\agame}}
\newcommand{\apdp}{D{}^P_{\agame'}}
\newcommand{\bod}{D{}^O_{\bgame}}
\newcommand{\bpd}{D{}^P_{\bgame}}
\newcommand{\cpd}{D{}^P_{\cgame}}
\newcommand{\cod}{D{}^O_{\cgame}}
\newcommand{\ats}{v}
\newcommand{\bts}{w}

\newcommand{\amoves}{\moves_{\agame}}
\newcommand{\bmoves}{\moves_{\bgame}}
\newcommand{\cmoves}{\moves_{\cgame}}
\newcommand{\aposn}{\posn_{\agame}}
\newcommand{\bposn}{\posn_{\bgame}}
\newcommand{\aposnmax}{\posnmax_{\agame}}
\newcommand{\aposnmaxr}{\posnmax_{\agamer}}
\newcommand{\bposnmax}{\posnmax_{\bgame}}
\newcommand{\aPO}{{}^{{\small PO}}_{QA}\lambda{}_{\agame}}
\newcommand{\bPO}{{}^{{\small PO}}_{QA}\lambda{}_{\bgame}}

\newcommand{\apos}{s}
\newcommand{\bpos}{t}
\newcommand{\cpos}{u}

\newcommand{\inv}[1]{#1^{-1}}

\newcommand{\bool}{{\tt Bool}}
\newcommand{\qbit}{{\tt QBit}}
\newcommand{\one}{{\bf 1}}

\newcommand{\oans}{{\tt oans}}
\newcommand{\pans}{{\tt pans}}
\newcommand{\answers}{{\tt ans}}

\newcommand{\fn}[1]{f_{#1}}

\newcommand{\amove}{m}
\newcommand{\bmove}{n}
\newcommand{\hi}[1]{{\tt D}_{#1}}
\newcommand{\ahi}{{\tt D}_{\agame}}
\newcommand{\bhi}{{\tt D}_{\bgame}}

\newcommand{\m}[2]{#1 \cdot #2}
\newcommand{\mstar}[2]{#1 \cdot #2}
\newcommand{\minfty}[2]{#1 \cdot #2}

\newcommand{\complex}{{\cal C}}
\newcommand{\abs}[1]{\lvert #1 \rvert}
\newcommand{\ac}{z}
\newcommand{\bc}{u}

\renewcommand{\fCenter}{\mbox{$\vdash$}}
\newcommand{\aForm}{A}
\newcommand{\bForm}{B}
\newcommand{\cForm}{C}
\newcommand{\amForm}{\Gamma}
\newcommand{\bmForm}{\Delta}

\newcommand{\str}[1]{\sigma_{#1}}
\newcommand{\astr}{\sigma}
\newcommand{\bstr}{\tau}
\newcommand{\conj}[1]{#1^{\star}}
 
\newtheorem{thm}{Theorem} 
\newtheorem{defn}[thm]{Definition}
\newtheorem{lemma}[thm]{Lemma} 
\newtheorem{example}[thm]{Example} 
\newtheorem{corollary}[thm]{Corollary}
\newtheorem{comment}[]{Tangential Comment}

\newcommand{\V}[1]{{\mathcal V}(#1)}
\newcommand{\Hi}[1]{{\mathcal H}(#1)}
\newcommand{\carr}[1]{\left| #1 \right| }
\newcommand{\norm}[1]{\left|\left| #1 \right| \right| }

\newcommand{\rev}[1]{{\mathcal R}(#1)}
\newcommand{\pair}[2]{\langle #1, #2 \rangle}
\newcommand{\power}[1]{{\cal P}(#1)}
\newcommand{\kernel}[1]{{\tt ker}(#1)}
\newcommand{\rng}[1]{{\tt rng}(#1)}
\newcommand{\EX}{{\tt EX}}
\newcommand{\pcomp}{\left| \right|}
\newcommand{\rest}{\downharpoonright}
\newcommand{\red}[1]{\overline{#1}}

\newcommand{\true}{{\tt tt}}
\newcommand{\false}{{\tt ff}}
\newcommand{\cnot}{{\tt CNOT}}

\newcommand{\atv}{v}
\newcommand{\btv}{w}
\newcommand{\pref}[1]{{\tt Pref}(#1)}
\newcommand{\suff}[1]{{\tt Suff}(#1)}

\newcommand{\acr}{\coh_{\agame}}
\newcommand{\bcr}{\coh_{\bgame}}

\newcommand{\bgg}{{\mathcal G}}
\newcommand{\iso}{\ \raisebox{0.5ex}{\ensuremath{\stackrel{\sim}{\rule{16pt}{0.7pt}}}\ \,}}

\newcommand{\vg}{{\cal V}}
\newcommand{\qg}{{\cal U}}
\newcommand{\qcg}{{\cal CP}}
\newcommand{\R}{{\cal R}}

\newcommand{\ndim}{{\bf n}}
\newcommand{\five}{{\bf 5}}

\newcommand{\adjadj}[1]{#1^{\dagger\dagger}}
\newcommand{\mat}[1]{M_{#1}}
\newcommand{\amat}{\mat{\astr}}
\newcommand{\bmat}{\mat{\bstr}}

\newcommand{\lplus}{\bigoplus}
\newcommand{\assoc}{{\tt assoc}}
\newcommand{\llassoc}{{\tt assoc}_{\llwith}}
\newcommand{\passoc}{{\tt assoc}_{\lplus}}
\newcommand{\app}{{\tt app}}
\newcommand{\injl}{{\tt injl}}
\newcommand{\injr}{{\tt injr}}
\newcommand{\lsumstr}[2]{[#1,#2]}
\newcommand{\distts}{{}^{\tiny \tensor\lplus}{\tt dist}}
\newcommand{\distiw}{{}^{\linimpl\llwith}{\tt dist}}

\newcommand{\trace}[2]{{\tt Tr}{}^{#1}_{#2}}
\newcommand{\cntrl}[1]{{\tt cntrl}(#1)}

\newcommand{\avec}{\Delta} 
\newcommand{\bvec}{\Gamma}


\begin{abstract}
We develop a symmetric monoidal closed category of games, incorporating sums and products, to model quantum computation at higher types. This model is expressive, capable of representing all unitary operators at base types. It is compatible with base types and realizable by unitary operators.
\end{abstract}

\section{Introduction}
We understand how to model quantum control in Turing machines~\citep{doi:10.1137/S0097539796300921} and how to model higher-order classical computation over quantum data~\citep{Selinger2005,Sel14}. However, we seem to lack similar agreement and understanding on quantum, structured, higher-order control in the style of functional programming. This is the problem posed by \citeauthor{Sel2004}; what is a semantic model of higher-order quantum computation? In this paper, we provide a candidate solution, albeit without accommodating measurement.  Our starting points is $\bgg$, a new (deterministic) games semantics for Intuitionist, Multiplicative($\tensor, \linimpl)$, Additive ($\llwith,\lplus$) Linear Logic(IMALL).  We build on $\bgg$ to describe the following categories of games.
\begin{itemize}
\item $\vg$ is universal for all linear maps on $n$-ary booleans.  $\vg$ models IMLL and $\lplus$.

\item $\qg$, a subcategory of $\vg$, that permits only unitary morphisms.  $\qg$ is universal for unitaries  on $n$-ary tensor of qbits, and is a model of IMLL, with an additional monoidal operation $\lplus$.
\end{itemize}
Both categories support appropriate distributivity between the monoidal operations.

Consider the desiderata outlined by \citet{Sel2004} for a solution, which we paraphrase as follows:

\begin{description}
\item[Universality:] The model must support all unitary operations over the $n$-ary tensor product of Qbits.
\item[Compatibility:] The model should align with base types, ensuring that  all type constructors, such as tensor ($\otimes$), apply uniformly across types.
\item[Realizability:] The morphisms should be implementable as quantum processes.
\end{description}

From existing formalisms, the ZX-calculus~\citep{ZX} achieves universality for all linear maps on booleans, thereby (intentionally) exceeding the scope of quantum processes. Selinger's analysis shows that Quantum Coherence spaces~\citep{Girard2004-GIRBLA} lacks both universality and realizability, whereas Selinger's normed cones~\citep{Sel2004} fail to meet compatibility with base types.   ~\citet{10.1145/3632861} meets all listed criteria, albeit only within a first-order framework.

Our understanding of deterministic higher-order computation emerges from the tight linkage between types/formulas and programs/proofs. Acknowledging that quantum processes will inevitably yield new proofs, we adhere to the following principle to preserve the connection with logic:

\begin{description}
\item[Conservativity:] New theorems should not emerge among old formulas.
\end{description}

While we do not exclude the possibility of introducing novel type/formula constructors unique to the quantum domain (e.g., see~\citet{FKS2022-lindep-long}), the validity of this principle is debatable, as it seemingly prioritizes existing logic. Nonetheless, we cautiously adopt this principle, viewing it as a constructive limitation providing guardrails on quantum higher-order control flow.

We informally discuss the key conceptual elements of our design within this context. As motivation, we recall that higher-order combinators are subtle, even in the more restricted framework of reversible computation. Consider application, $\app$, traditionally depicted through the directional flow of data among function, argument, and result, as illustrated by the arrows in the image below.

\vspace{5pt}

\begin{minipage}{\textwidth}
\centering
\begin{tikzpicture}[node distance=1cm and 1cm]
    \node (a1) at (1,-1.0) {$[(\alpha \linimpl\beta) \tensor \alpha]\linimpl \beta$};  
    \node (a) at (0,0) {$(\alpha\linimpl\beta) \tensor \alpha$};
    \node (b) at (-1,1.0) {$\alpha\linimpl\beta$};
    \node (d) at (-1.5,2.0) {$\alpha$};
    \node (e) at (-0.5,2.0) {$\beta$};
    \node (f) at (1,2.0) {$\alpha$};
    \node (g) at (3,2.0) {$\beta$};

    \draw (a1) -- (a);
    \draw (a) -- (b);
    \draw (a) -- (f);
    \draw (b) -- (d);
    \draw (b) -- (e);
    \draw (a1) -- (g); 

     \draw[-{Latex[length=3mm]}] (f) -- ++(0, 0.7) -| (d);
     \draw[-{Latex[length=3mm]}] (e) -- ++(0, 1.05) -| (g);
\end{tikzpicture}
\end{minipage}

\vspace{5pt}

How does one consider $\app$ within a reversible framework? We must move beyond the first-order framework of (rig)-groupoids~\citep{10.1145/3632861}. Instead, we embrace the notion that reversible computing does not necessarily mean computations are invertible~\citep{HEUNEN2015217}, avoiding the pursuit of a witness in $\beta \linimpl (\alpha \linimpl \beta) \tensor \alpha$. Our goal is to also incorporate the flow of information in the direction opposite to that depicted in the figure above.

This perspective naturally leads to viewing $\app$ as a proof in multiplicative linear logic, as discussed in the general treatment of reversibility by~\citet{ABRAMSKY2005441}, which includes exponentials. Conceptually, this involves disregarding the directionality of the arrows in the figure above, a change that might seem trivial but is significant enough to warrant illustration in the subsequent diagram.

\vspace{10pt}

\begin{minipage}{0.6\textwidth}
\centering
\begin{tikzpicture}[node distance=1cm and 1cm]
    \node (a1) at (1,-1.0) {$[(\alpha_1\linimpl\beta_1) \tensor \alpha_2]\linimpl \beta_2$}; 
    \node (a) at (0,0) {$(\alpha_1\linimpl\beta_1) \tensor \alpha_2$};
    \node (b) at (-1,1.0) {$\alpha_1\linimpl\beta_1$};
    \node (d) at (-1.5,2.0) {$\alpha_1$};
    \node (e) at (-0.5,2.0) {$\beta_1$};
    \node (f) at (1,2.0) {$\alpha_2$};
    \node (g) at (3,2.0) {$\beta_2$};

    \draw (a1) -- (a);
    \draw (a) -- (b);
    \draw (a) -- (f);
    \draw (b) -- (d);
    \draw (b) -- (e);
    \draw (a1) -- (g); 

     \draw (f) -- ++(0, 0.5) -| (d);
     \draw (e) -- ++(0, 1.05) -| (g);
\end{tikzpicture}
\end{minipage}
\begin{minipage}{0.4\textwidth}
\centering
$
 M= \begin{pmatrix}
& \alpha_1 &  \beta_1& \alpha_2 &  \beta_2 \\
\alpha_1& 0 & 0 & 1& 0 \\
\beta_1 & 0 & 0 & 0 & 1 \\
\alpha_2&  1 & 0 & 0 & 0\\ 
\beta_2 &  0 & 1 & 0 & 0
\end{pmatrix}$
\end{minipage}

\vspace{5pt}
We distinguish occurrences with subscripts.  
In multiplicative linear logic, proofs are structured around axiom links that pair identical atoms of opposite polarity. The remainder of the proof is wholly defined by the formula and essentially acts as a verification of the proof's validity. It's important to note that the switchings in the correctness criteria are dictated by the types alone, as highlighted by~\citep{Danos1989-DANTSO-5}.  For $\app$, the axiom links are represented by the image and matrix $M$ above.  Since the axiom links are a bipartite matching, $M$ is a symmetric orthogonal matrix.

Building on the Geometry of Interaction~\citet{GIRARD1989221}, game semantics~\citet{BLASS92,AJ94} models axiom links as copycat strategies. These strategies entail copying moves from one side of the axiom link to the other, highlighting a bidirectional and symmetric exchange that underscores the reversible nature of the $\app$ combinator. Accordingly, $\app$--—along with all other proofs—--aligns with the core principle of game semantics:
\begin{center}
\emph{The semantics of a program is a set of traces}
\end{center}
It's noteworthy that this framework does not explicitly address higher-order structures, such as abstraction and higher-order functionals, a characteristic especially prevalent in the style of games proposed by~\citet{AJ94,AJMPCF}\footnote{Conversely, the games by~\citet{HYLAND2000285} and~\citet{DBLP:conf/lfcs/Nickau94} incorporate a “pointer” structure, which deviates from the linearity of traces.}. 

Our objective is to broaden this perspective to encompass more complex forms of quantum computation. Within a proof for a given formula, the modifiable element lies in the configuration of axiom links. To this end, we propose the following generalization:
\begin{center}
   \emph{perfect matching on bipartite graphs $\rightsquigarrow$ unitary flow}
\end{center}
To see this, consider $(\alpha\tensor\alpha) \linimpl \alpha \tensor \alpha$ and its two illustrative proofs, $\id$ and symmetry ($\twist$), operations, described below.  The matrices are written in block-structured form, with 
$I = \begin{pmatrix}  
 1 & 0\\ 
 0 & 1 
\end{pmatrix}$
and
$X = \begin{pmatrix} 
 0 & 1\\ 
 1 & 0 
\end{pmatrix}$,
and rows and columns are ordered $\alpha_1, \ldots, \alpha_4$.  

\vspace{10pt}

\begin{minipage}{.25\textwidth}
\centering
\begin{tikzpicture}[node distance=1cm and 1cm]
    \node (t) at (0,-1) {$\id$};
    \node (a) at (0,0) {$(\alpha_1\tensor\alpha_2) \linimpl \alpha_3 \tensor \alpha_4$};
    \node (b) at (-1,1.5) {$\alpha_1\tensor\alpha_2$};
    \node (c) at (1,1.5) {$\alpha_3\tensor\alpha_4$};
    \node (d) at (-1.5,2.5) {$\alpha_1$};
    \node (e) at (-0.5,2.5) {$\alpha_2$};
    \node (f) at (0.5,2.5) {$\alpha_3$};
    \node (g) at (1.5,2.5) {$\alpha_4$};

    \draw (a) -- (b);
    \draw (a) -- (c);
    \draw (b) -- (d);
    \draw (b) -- (e);
    \draw (c) -- (f);
    \draw (c) -- (g);

    \draw[teal] (d) -- ++(0, 0.5) -| (f);
    \draw[orange] (e) -- ++(0, 1.05) -| (g);
\end{tikzpicture}
\end{minipage}
\begin{minipage}{.2 \textwidth}
\renewcommand{\arraystretch}{1.3}
\centering
\[
\left(
 \begin{array}{c|c}
0 & I \\ \hline 
 I & 0 
\end{array}
\right) 
\]
\end{minipage}
\begin{minipage}{.25\textwidth}
\begin{tikzpicture}[node distance=1cm and 1cm]
     \node (t) at (0,-1) {$\twist$};
    \node (a) at (0,0) {$(\alpha_1\tensor\alpha_2) \linimpl \alpha_3 \tensor \alpha_4$};
    \node (b) at (-1,1.5) {$\alpha_1\tensor\alpha_2$};
    \node (c) at (1,1.5) {$\alpha_3\tensor\alpha_4$};
    \node (d) at (-1.5,2.5) {$\alpha_1$};
    \node (e) at (-0.5,2.5) {$\alpha_2$};
    \node (f) at (0.5,2.5) {$\alpha_3$};
    \node (g) at (1.5,2.5) {$\alpha_4$};

    \draw (a) -- (b);
    \draw (a) -- (c);
    \draw (b) -- (d);
    \draw (b) -- (e);
    \draw (c) -- (f);
    \draw (c) -- (g);

     \draw[blue] (d) -- ++(0, 0.5) -| (g) ;
     \draw[green] (e) -- ++(0, 1.05) -| (f);
\end{tikzpicture}
\end{minipage}
\begin{minipage}{.2\textwidth}
\renewcommand{\arraystretch}{1.3}
    \centering
    \[
    \left(
     \begin{array}{l|l}
  0 &  X\\ \hline
  X & 0
\end{array}
\right)
\]
\end{minipage}

Let us consider a new possible proof $\sqrt{\twist}$ whose axiom links are represented by the unitary, hence realizable, matrix (with same row, column conventions as above):

\vspace{10pt}
\begin{minipage}{.6\textwidth}
\centering
\begin{tikzpicture}[node distance=1cm and 1cm]
    \node (a) at (0,0) {$(\alpha_1\tensor\alpha_2) \linimpl \alpha_3 \tensor \alpha_4$};
    \node (b) at (-1,1.5) {$\alpha_1\tensor\alpha_2$};
    \node (c) at (1,1.5) {$\alpha_3\tensor\alpha_4$};
    \node (d) at (-1.5,2.5) {$\alpha_1$};
    \node (e) at (-0.5,2.5) {$\alpha_2$};
    \node (f) at (0.5,2.5) {$\alpha_3$};
    \node (g) at (1.5,2.5) {$\alpha_4$};

    \draw (a) -- (b);
    \draw (a) -- (c);
    \draw (b) -- (d);
    \draw (b) -- (e);
    \draw (c) -- (f);
    \draw (c) -- (g);

     \draw[blue] (d) -- ++(0, 0.5) -| (g) ;
     \draw[green] (e) -- ++(0, 1.0) -| (f);
     
     \draw[teal] (d) -- ++(0, 1.5) -| (f);
    \draw[orange] (e) -- ++(0, 2.) -| (g);
  
\end{tikzpicture}
\end{minipage}
\begin{minipage}{.4\textwidth}
\renewcommand{\arraystretch}{1.3}
    \centering
    Let $\ac = 1 + i$ in  
    \[\frac{1}{2}
    \left(
     \begin{array}{c|c}
  0 &  \ac\ I + \conj{\ac} X \\ \hline
 \ac\ I + \conj{\ac} X   & 0
\end{array}
\right)
\]
\[
\sqrt{\twist} = \frac{1}{2} [ \ac\ \id + \conj{\ac} \  \twist ]
\]
\end{minipage}

The matrix of $\sqrt{twist}$ is unitary, thus meeting \emph{Realizability}.  Assuming bilinearity of cut elimination:
\[\sqrt{\twist} ; \sqrt{\twist} = \twist \]
So, $\sqrt{\twist}$ is novel from the perspective of monoidal categories.  Furthermore, each (deterministic) summand\footnote{Contrast with probabilistic/quantum coherence space models~\cite{Girard2004-GIRBLA}, where only convex combinations of component elements arise in $\lplus$ types, and $\tensor$ is the closure of sums of tensors of component elements. } of $\sqrt{\twist}$ is verifiable by extant correctness criterion; thus meeting the \emph{conservativity} requirement.    

The presence of such morphisms hints at the possibilities available in quantum control.  For example, the two terms $t_1 = \lambda (f,g) \lambda x g(f(x))$ and $t_2 = \lambda (f,g) \lambda x f(g(x))$ are the two proofs of 
$[(\alpha \linimpl \alpha) \tensor \alpha \linimpl \alpha)] \linimpl (\alpha \linimpl \alpha)
$
that are connected as: 
\[
t_1 = (\alpha \linimpl \alpha) \tensor (\alpha \linimpl \alpha) \xlongrightarrow{\twist} (\alpha \linimpl \alpha) \tensor (\alpha \linimpl \alpha)  \xlongrightarrow{t_2} \alpha \linimpl \alpha
\]
In the quantum world, we can also construct:
\[
(\alpha \linimpl \alpha) \tensor (\alpha \linimpl \alpha) \xlongrightarrow{\sqrt{\twist}} (\alpha \linimpl \alpha) \tensor (\alpha \linimpl \alpha)  \xlongrightarrow{t_2} \alpha \linimpl \alpha
\]
where we use $\sqrt{\twist}$ at type $(\alpha \linimpl \alpha) \tensor (\alpha \linimpl \alpha)$.

In the paper, we show how to implement the idea semantically; roughly, performing the following substitutions;
\begin{alignat*}{4}
  \mbox{Formulas } & \rightsquigarrow\mbox{Games} \\
  \mbox{Proofs}    & \rightsquigarrow\mbox{(Reversible) Strategies} \\
  \mbox{Cut elimination} & \rightsquigarrow \mbox{Parallel composition + Hiding} \\
  \mbox{Generalized axiom links}  & \rightsquigarrow  \mbox{''Sums'' of strategies} 
\end{alignat*}
The first three elements above are standard to game semantics,~\citet{Abramsky_1997}; our novel contribution in this context is showing how to interpret additives reversibly, overcoming the difficulties described in~\citet{ABRAMSKY2005441}, and extending to the quantum realm.

The remainder of this paper is dedicated to formally elaborating on these concepts. We begin by introducing a new model of deterministic games ($\bgg$), tailored to Intuitionist Multiplicative Additive Linear Logic(IMALL). $\bgg$ is characterized by its history-sensitive strategies that observe distributivity—-- specifically, tensor over plus and implication over product. 

Subsequent sections of the paper expand this model to include appropriate sums, thereby enriching the category further. We introduce a category, $\vg$, that serves as a model for IMLL+$\lplus$, that is  universal in accommodating all linear operators on $n$-ary booleans. 

In the final sections, we explore reversible and unitary computation. We illustrate how a category of unitaries ($\qg$) can be defined as a subcategory of $\vg$. Notably, we identify a “first-order” fragment of $\qg$ that constitutes a rig-groupoid, complemented with the additional structure as outlined by~\cite{10.1145/3632861}. 

\section{Notation}
\begin{itemize}
    \item We will work with reflexive, binary coherence relations.  We use $\coh$ (resp. $\icoh$) for the coherence (resp. strict incoherence) relation.  Cliques are sets of elements that are pairwise coherent.  

   \item For any set $X$, $X^\star$ is the set of finite sequences of $X$.  

    We use juxtaposition for concatenation.  We will sometimes overload juxtaposition for concatenation when one or both of the arguments is a set to indicate the set derived by pointwise extensions. 
   
   We will consider prefix ordering $\preceq$, e.g., $\apos \preceq \apos \amove$.  For $S \subseteq X^{\star}$, we write $\pref{S}$ for the prefix closure of $S$.  
    
   For $x \in X, \apos \in X^{\star}$, we write $x \in \apos$ if $x$ is in $\apos$. For $S \subseteq X, \apos \in X^{\star}$, we write $S \subseteq \apos$ if every $x \in S$ is in $\apos$.
   
    \item Let $\complex$ be the complex numbers.  We use $\ac,\bc$ for complex numbers.
    
    \item For a finite set $X$, 
    $\Hi{X}$ is the finite dimensional Hilbert space with orthonormal basis $X$ over the field of complex numbers.  We use $\lplus$ (resp. $\tensor$) for the direct-sum (resp. tensor) of Hilbert spaces.   

    For a linear map $f$ between Hilbert spaces, $\adj{f}$ will stand for its adjoint.   We will use the matrix representation of linear maps to illustrate them; in that context, we use $\tensor$ to also stand for the matrix-tensor product. 

    In order to be able to relate to prior set based definitions on games, we use some notation to relate elements of $\Hi{X}$ to the subsets of $\complex \times X$.   Let $v \in \Hi{X} $.  We associate with $v$, the set of elements $\{ (\ac, x) \mid  \ac \not=0,  \ac = \mbox{inner product of } v, x \} $, and abuse notation sometimes, eg. $(\ac,x) \in v$.

    \item We use ${\cal C}(A,B)$ to refer to the morphisms between $A$ and $B$ in category ${\cal C}$.  This is especially useful for us because we sometimes view the same underlying morphisms in the context of different categories.  

    \item We use the language of traced monoidal categories~\citep{Joyal_Street_Verity_1996,SamsonTraces} to organize our presentation of composition. Two particular examples are particularly relevant to our presentation.  
    
    The category of relations with product as tensor, is traced as follows, given $R: X \times U \rightarrow Y \times U$, $\trace{U}{X,Y}(x,y) = \exists u. R(x,u,y,u)$.

    The category of finite-dimensional vector spaces is traced as follows.  Given $f: V \tensor U \rightarrow W tensor U$, where $U, V, W$ have bases $\{u_i\}, \{v_j \}, \{ w_k \}$, define 
    \[ \trace{U}{X,Y}(f)(v_j) = \sum_{j,k} a^{ki}_{ji} w_k  \ \ \ \mbox{where} \ \ \ \  f(v_j \tensor u_i) = \sum_{k,m} a^{km}_{ji} w_k \tensor u_m \]

Given $f: X \tensor Y \rightarrow X' \tensor Y$, and $g: Y \tensor Z \rightarrow Y \tensor Z'$, we will in particular consider  $\trace{Y}{X,Z,X',Z'}(f \tensor g):x \tensor Z \rightarrow X' \tensor Z' $\citep{SamsonTraces}.  While the trace operator is symmetric wrt $f,g$, We sometimes write $f; g$ to exploit its compatibility with standard sequential composition, while always being careful to respect the types.

\end{itemize}

\section{Deterministic Games}
In this section, we build a model of IMALL, with all permitted (weak) distributivities. Our development follows the development of~\citet{AJ94}, enhanced with the Question/Answer framework of~\citet{AJMPCF,HYLAND2000285}.  We try to present the model from first principles, but rely extensively on this prior work for the proofs.   

Hopefully, a less onerous way for the non-specialist in game-semantics to read this section is to skip the proofs, and focus on the model and the examples and properties as summarized in the statements of lemmas.  

For the specialist on games, we draw attention to the characterization of composition via the trace operator of~\citet{SamsonTraces} and~\citet{Joyal_Street_Verity_1996}; we reconcile the ``sum interpretation'' with the ''product interpretation'' of the Geometry of Interaction~\citep{haghverdi_geometry_2010}.

\subsection{Games}

A game represents the rules of a discourse between two players, the Player and Opponent. Games are presented as finite sets of finite maximal positions \footnote{Since all positions are finite, there are no winning conditions.}. 
\begin{defn}[Games]
$\agame = ( \amoves,\aPO,\aposnmax, \acr)$ where 
\begin{itemize}
    \item $\aPO: \amoves \rightarrow \{ P, O \} \times \{ Q , A \}$ is a labelling function.
     \item $\acr \subseteq \amoves \times \amoves$ is a binary coherence relation on answers such that if $\amove \acr \bmove$ then $\aPO(\amove) = \aPO(\bmove)$.
    \item $\aposnmax \subseteq \amoves^{\star}$ is the finite set of finite maximal positions of the game such that for all $\apos \in \aposnmax$:
    \begin{description}
    \item[Alternation: ] Moves are of alternating polarity of $O$ and $P$, and the first move is by $O$
    \item[Well-Balanced: ] In any prefix of $\apos$, the number of questions is never less than the number of answers. $\apos$  has equal numbers of questions and answers.
    \item[Completeness:] The set of answers in $\apos$ is the union of a  maximal $P$-clique and a maximal $O$-clique.  
    \item[Extension: ] If $\apos \in \pref{\aposnmax}$, and $\atv$ (resp. $\btv$) is a maximal $O$-clique (resp. maximal $P$-clique) such that $\answers(\apos) \subseteq \atv \cup \btv$, then
    \[ (\exists \apos' \in \aposnmax) \ [ \apos\preceq \apos', \answers(\apos') = \atv \cup \btv ]
    \]
    \end{description}
\end{itemize}
\end{defn}
We drop the subscript identifying the game whenever possible.   We use $\overline{\aPO}$ to indicate the labeling function that flips the $O/P$ polarities.  We use $\answers(\apos)$ to refer to the set of answers in $\apos$.  We use $\oans(S)$  for the set of $O$-answers in $S$).  Similarly, $\pans(S)$ is the set of $P$-answers. 

The games we consider are negative games, where the opening move is by the Opponent.  The moves are labelled as questions and answers.  Viewing questions as $($, and answers as $)$, the completed positions in $\aposnmax$ are well-balanced strings of parenthesis, in which the left parenthesis never get ahead of the right parenthesis.  Thus, there is a natural notion of a matching question for each answer, with every question answered in a complete play, which therefore has even length.    

Answers identify the accumulation of information in a position. Both players can contribute to the accumulation of information via answers. A maximal play represents total information, that is approximated by the partial information in its prefixes.  

The coherence relation constrains the answers that can appear together in a position. Every maximal play contains a complete clique of Opponent answers and a complete clique of Player answers.  Since the first move is an Opponent-question that has to be matched by a Player-answer, any game with a non-empty set of plays has at least one $P$-clique.

\emph{Extension} is a saturation condition.  The partial information in a prefix can always to be completed, \emph{collaboratively} by the two protagonists, to total information that is consistent; thus, any position can always be extended to a maximal position with specified cliques that are consistent with  the answers already in the position. 

The maximal cliques in a completed position are the basis for the Hilbert spaces associated with a game.
\begin{defn}
Let $\aod = \Hi{\{\atv \mid \atv\mbox{ is a maximal $O$-clique} \}}$, and $\apd= \Hi{\{\atv \mid \atv\mbox{ is a maximal $P$-clique}\}}$.  Then:
    \[ \ad = (\aod, \apd)\]
\end{defn}
A theme in our development will be that whereas the Hilbert spaces of the games will identify the additives and the multiplicatives, the game structure provides a more refined perspective that distinguishes them.  

\begin{example}\label{simplegame}
The game $1$ has no moves. 

The game $\bool$ has $O$ question move $\perp$, and $P$ answer moves $\true, \false$,  with $\true \coh\ true$ and $\false \coh\ \false$.  The two maximal positions are $\perp \true, \perp \false$.  
\end{example}

\subsection{Constructions on objects}
\paragraph*{Tensor. }
The tensor construction is standard.  
\begin{defn}[Tensor]
Let $\agame = ( \amoves,\aPO,\aposnmax, \acr)$, and  
$\bgame =  ( \bmoves,\bPO,\bposnmax, \bcr)$.
\[ \agame \tensor \bgame =  ( \amoves \uplus \bmoves,\aPO \uplus\bPO,\posnmax, \coh)\]
where
\begin{itemize}
\item $\coh\ = \ \acr \cup \bcr \cup\ \{ (\amove, \bmove) \mid \amove, \bmove \ \mbox{ answers in different components}  \} $
\item For all $\apos \in \posnmax$, $\apos \rest \agame$ (resp. $\apos \rest \bgame$) is in $\aposnmax$ (resp. $\bposnmax$)
\end{itemize}
\end{defn}
We sometimes refer to $\agame, \bgame$ as the \emph{components}.  The conditions ensure that only $O$ switches between the components $\agame$ and $\bgame$; a constraint that limits the information flow between the components for the Player relative to the Opponent.   It also ensures that the answer matching a question in $\agame$ is from $\agame$; similarly for $\bgame$.   Tensor introduces no new incoherence; answers from different components are always coherent, and the maximal cliques of $\agame \tensor \bgame$ are pairs of maximal cliques of $\agame$ and $\bgame$.  Thus:
\[ \dom{\agame \tensor \bgame} = (\aod \tensor \bod, \apd \tensor \bpd) \]

To verify that the tensor game satisfies \emph{extension}, consider a non-maximal position $\apos$.  Since $O$ can switch components, the key case is when $P$ is to move in $\apos$.  If $P$ is to move in a component (say $\agame$), the projection of $\apos$ to it $(\apos \rest \agame)$ is of odd length, and hence not maximal; so there is an extension of the position with suitable properties from the extension hypothesis on the component game. 

\begin{example}
$\bool_1 \tensor \bool_2$ has $O$ question moves $\perp_1, \perp_2$, and $P$ answer moves $\true_1,\true_2 \false_1,\false_2$ has eight maximal positions; two of them are
\[
\begin{array}{l}
\perp_1 \true_1 \perp_2 \true_2 \\
\perp_2  \true_2 \perp_2 \false_2
\end{array}
\]
The other maximal positions are derived by the symmetry between $\true, \false$ and between $\agame, \bgame$
\end{example}

\paragraph*{Linear Implication. }
Linear implication switches the $P/O$ polarity in $\agame$.   In contrast to the standard treatment, we require both components to be visited in a complete play, following the fair games of Section 6 of~\citet{MURAWSKI2003269}.
\begin{defn}[Linear Implication]
Let $\agame = ( \amoves,\aPO,\aposnmax, \acr)$, $\bgame = ( \bmoves,\bPO,\bposnmax, \bcr)$.
\[ \agame \linimpl \bgame =  ( \amoves \uplus \bmoves,\overline{\aPO} \uplus\bPO,\posnmax, \coh)  \]
where
\begin{itemize}
\item $\coh\ = \ \acr \cup \bcr \cup\ \{ (\amove, \bmove) \mid \amove, \bmove \ \mbox{ answers in different components}  \}$
\item For all $\apos \in \posnmax$, $\apos \rest \agame$ (resp. $\apos \rest \bgame$) is in $\aposnmax$ (resp. $\bposnmax$)
\end{itemize}
\end{defn}
The last move of any maximal position is in $\bgame$; furthermore, this can only happen after reaching a maximal position in $\agame$.   In particular, for any game $\agame$, $\agame \linimpl \one = \one$.  

Only $P$ switches between the components $\agame$ and $\bgame$.  The answer matching a question in $\agame$ is from $\agame$; similarly for $\bgame$.   

Linear implication introduces no new incoherence; answers from different components are always coherent; so, the maximal cliques of $\agame \linimpl \bgame$ are pairs of maximal cliques of $\agame$ and $\bgame$.   The Hilbert space of the implication game incorporates the impact of the inverted polarity in $\agame$:
\[ \dom{\agame \linimpl \bgame} = (\apd \tensor \bod, \aod \tensor \bpd) \]

To verify that the implication game satisfies \emph{extension}, consider a non-maximal position $\apos$.  Since $P$ can switch components, the key case is when $O$ is to move in $\apos$.  There are two cases.
\begin{itemize}
    \item $\apos = \epsilon$.  If $\bgame$ is empty, so is $\agame \linimpl \bgame$ and $\apos$ is maximal.  Otherwise, $O$ can move in $\bgame$. 
    
    \item $\apos = \apos' \amove$. $O$ has to move in the same component  as $\amove$. 

    If $\amove$ is from $\agame$.  Since first move in $\apos \rest \agame$ is by Player, $|\apos \rest \agame|$ is odd; so, not maximal in $\agame$. 

     If $\amove$ is from $\bgame$.  The last move in the maximal positions of the implication game is in $\bgame$.  Since $\apos$ is not maximal, $\apos \rest \bgame$ is not maximal.  
\end{itemize}
Thus, there is an extension of the position with suitable properties from the extension hypothesis on the component game.

\begin{example}
$\bool_1 \linimpl \bool_2$ has $O$ question move $\perp_2$, $P$ question move $\perp_2$, $P$ answer moves $\true_2,\false_2$ and $O$ answer move $\true_1, \false_1$   There are only four maximal positions;
\[
\begin{array}{l}
\perp_1 \perp_2 \true_1  \true_2 \\
\perp_1 \perp_2 \true_1\false_2 \\
\perp_1 \perp_2 \false_1  \true_2 \\
\perp_1 \perp_2 \false_1  \false_2
\end{array}
\] 
\end{example}

\paragraph*{Product. }
The Opponent chooses between the two games in a product.
\begin{defn}[Product]
Let $\agame = ( \amoves,\aPO,\aposnmax, \acr)$, $\bgame = ( \bmoves,\bPO,\bposnmax, \bcr)$.
\[ \agame \llwith \bgame =  ( \amoves \uplus \bmoves,\aPO \uplus\bPO,\aposnmax \uplus \bposnmax, \acr \uplus \bcr)  \]
\end{defn}
The extension property for the product game follows immediately from the assumption that the components $\agame,\bgame$ satisfy extension.   

The product game also illustrates how the extension of a position to a complete position requires collaboration, in this case from the Opponent.  Concretely, in the starting position $\epsilon$, the Opponent has to start in $\bgame$ to advance to a complete position in $\bgame$. 

Answers from different components are incoherent.  So, since the maximal plays are inherited from the component games, the maximal cliques of $\agame \llwith \bgame$ are either from $\agame$ or from $\bgame$.
\[ \dom{\agame \llwith \bgame} = (\apd \times \bod, \aod \times \bpd) \]

\paragraph*{Sum}
In contrast to product, the Player chooses between the two component games, facilitated by a protocol at the beginning of the game.  

\begin{defn}[Sum]
Let $\agame = ( \amoves,\aPO,\aposnmax, \acr)$, and  
$\bgame = ( \bmoves,\bPO,\bposnmax, \bcr)$.
\[ \agame \lplus \bgame =  ( \amoves \uplus \bmoves \uplus \{\perp, l, r \} ,\lambda,\posnmax, \coh )  \]
where
\begin{itemize}

\item \[    \lambda(\amove) = \left\{ 
                                \begin{array}{l}
                                \lambda_{\agame}(\amove), \ \amove \in \amoves \\
                                \lambda_{\bgame}(\amove), \ \amove \in \bmoves \\
                                (P,A),  \  \amove \in \{ l, r \} \\
                                (O,P), \   \amove = \perp
                                \end{array}
                            \right.
    \]
    
\item $\posnmax = \perp\ l \ \aposnmax\ \  \uplus\ \  \perp\ r \ \bposnmax$
\item $\coh\  =  \ \acr \cup \bcr \cup\ \{(l,\amove), (l,l) \mid \amove \in \answers(\agame)  \} \cup \{(r,\amove), (r,r) \mid \amove \in \answers(\bgame) \}  $
\end{itemize}
\end{defn}

The extension property for the product game follows immediately from the assumption that the components $\agame,\bgame$ satisfy extension.    

The sum game also illustrates how the extension of a position to a complete position requires collaboration, in this case from the Player.  After the initial Opponent move, we require the Player to start in $\bgame$ in order to advance to a complete position in $\bgame$, 

Answers from different components are incoherent, as are $l,r$.  Since the maximal plays are inherited from the component game, the maximal cliques of $\agame \lplus \bgame$ are either from $\agame$ or from $\bgame$.  In the former (resp.latter) case, these cliques also contain $l$ (rep. $r$).  Thus: 
\[ \dom{\agame \lplus \bgame} \approx (\aod \lplus \bod, \aod \lplus \bpd) \]
coinciding with product above. 

\begin{example}
$\bool = \one\ \lplus\ \one$
\end{example} 

\subsection{Strategy}
Strategies are deterministic transducers that produce responses based on prior history.  While ~\citet{AJMPCF} permits no history, and~\citet{HYLAND2000285} restricts the visible history, we permit our strategies to view the entirety of the past history.  

In anticipation of upcoming sections, the complete plays of a strategy are endowed with scalars; the reader only interested in game semantics may treat the scalar as always $1$.  So, we will think of a strategy $\astr$ as an element of $\Hi{\aposnmax}$\footnote{We will use our notation that permits us to use set-theoretic operations on such strategies.}.
\begin{defn}[Strategy]
Given $\agame = (\amoves,\aPO,\aposnmax, \acr)$, a strategy $\astr \in \Hi{\aposnmax}$ is such that:
\begin{description} 
\item[Determinacy: ] For all  $\apos \in \in \pref{\astr}$ of even length, if $\apos \amove \in \pref{\aposnmax}$, then there exists a unique $\amove$ such that $\apos \amove \bmove  \in \pref{\astr} $
\item[Monotonicity: ] For all $(\ac,\bpos) \in \astr$, if $\answers(\apos) \subseteq \answers(\bpos)$, then $\answers(\apos\amove) \subseteq \answers(\bpos)$.
\item[$\icoh$-preservation: ] Incoherence is preserved.   
\begin{alignat*}{2}
&\mbox{Forall } \apos \amove, \apos'\amove' \in \pref{\astr} \mbox{ with strictly incoherent $O$-answers } \amove \icoh \amove' \\
&\mbox{Forall complete extensions } \apos \amove \preceq \bpos,  \apos' \amove' \preceq \bpos', \{\bpos, \bpos' \} \in \pref{\astr} \cap \aposnmax \\ 
&\mbox{Exists extending prefixes } \apos \amove \apos_1 \bmove \preceq \bpos,  \apos' \amove'  \apos'_1 \bmove' \preceq \bpos' \\
&\mbox{such that $P$-answers } \bmove \icoh \bmove' \mbox{  are strictly incoherent}.
\end{alignat*}
\end{description}
\end{defn}
Determinacy implies \emph{receptivity} and \emph{response}, i.e. strategies have to respond to all valid Opponent moves.    

Monotonicity and preservation of incoherence preservation are global properties.  

The monotonicity axiom bites when strategy makes an answer move; in this case, the answer is consistent with the answers in every extending completed position in $\astr$.   From an information perspective, this condition ensures that the strategy is a monotonic function from Opponent answers to Player answers.

$\icoh$-preservation is to ensure that strategies are 1-1 (equivalently, that they preserve orthogonality in the future sections).  Given {\em any} two prefixes in the strategy that ending with $\icoh$-Opponent moves, a strategy has to respond with incoherent answers sometime in the future\footnote{A common constraint in game semantics with history-free strategies is to require that the functions determining strategies are history-free, a view that coexists nicely with the preservation of partial injections by composition in the Geometry of interaction.  The preservation of incoherence is a semantic generalization that allows us to include additives. }.

A non-example helps to illustrate $\icoh$-preservation.  Consider the attempted constant $\true: \bool \linimpl \bool = \{ \perp \perp \false \true, \perp \perp \true \true \} $.  The prefixes
$ \perp \perp \true$ and $ \perp \perp \false$ end with $\icoh$-Opponent answers, but their sole extensions in $\true$ end with coherent Player answers.  

Strategies always define a (partial) function from $\aod$ to $\apd$.  
\begin{defn}[(Partial) Function of a Strategy]\label{stratFN} Given $\astr: \agame$, define $\fn{\astr}: \aod \rightarrow \complex \times \apd{} \cup \{ 0 \}$ by
\[ \fn{\astr}(\atv) =  (\ac, \btv), \ \mbox{if } \exists (\ac,\apos) \in \astr, \oans(\apos) = \atv, \pans(\apos) =\btv 
\] 
\end{defn}
$\fn{\astr}$ is well-defined.
\begin{lemma}
$\fn{\astr}: \aod \rightarrow \complex \times \apd$ is a 1-1 (partial) function.
\end{lemma}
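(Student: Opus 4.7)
My plan is to split the statement into two parts: (i) well-definedness of $\fn{\astr}$ as a partial function, i.e., for each $\atv \in \aod$ there is at most one pair $(\ac,\btv)$ satisfying the defining condition; and (ii) injectivity on the domain where it is defined.

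For (i), the heart of the argument is showing that if $(\ac_1,\apos_1),(\ac_2,\apos_2) \in \astr$ both satisfy $\oans(\apos_j)=\atv$, then $\pans(\apos_1)=\pans(\apos_2)$. I will establish the stronger claim $\answers(\apos') \subseteq \answers(\apos_2)$ for every prefix $\apos' \preceq \apos_1$, by induction on the length of $\apos'$. The base case $\apos'=\epsilon$ is trivial, and in the inductive step for a one-step extension $\apos'\amove \preceq \apos_1$, monotonicity instantiated with $\bpos=\apos_2$ promotes the inductive hypothesis $\answers(\apos') \subseteq \answers(\apos_2)$ to $\answers(\apos'\amove) \subseteq \answers(\apos_2)$. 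Taking $\apos'=\apos_1$ gives $\answers(\apos_1) \subseteq \answers(\apos_2)$; then, because the completeness axiom guarantees that the $P$-answers in a complete play form a maximal $P$-clique, the inclusion $\pans(\apos_1) \subseteq \pans(\apos_2)$ is forced to equality by maximality, and symmetry closes the argument. Uniqueness of the scalar $\ac$ then follows from determinacy, since once the $O$-moves along a support play are fixed, the $P$-moves, and hence the position $\apos$ carrying $(\atv,\btv)$, are prescribed.

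For (ii), my plan is to leverage $\icoh$-preservation. Distinct maximal $O$-cliques $\atv_1 \neq \atv_2$ must be separated by a strict incoherence: if every element of $\atv_1$ were coherent with every element of $\atv_2$, then $\atv_1\cup\atv_2$ would be a clique properly extending both, contradicting maximality of either. So I can choose $\amove \in \atv_1$, $\amove' \in \atv_2$ with $\amove \icoh \amove'$. Letting $\apos_1,\apos_2$ be the complete plays witnessing $\fn{\astr}(\atv_j)$, I truncate them at the prefixes $\apos\amove, \apos'\amove' \in \pref{\astr}$ ending at the first occurrences of $\amove,\amove'$. Applying $\icoh$-preservation with $\bpos=\apos_1$ and $\bpos'=\apos_2$ yields later $P$-answers $\bmove \in \pans(\apos_1)$ and $\bmove' \in \pans(\apos_2)$ with $\bmove \icoh \bmove'$; since a clique cannot contain a strictly incoherent pair, $\pans(\apos_1) \neq \pans(\apos_2)$, whence $\fn{\astr}(\atv_1) \neq \fn{\astr}(\atv_2)$.

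I expect the main obstacle to be the bookkeeping in part (i): one must be careful both that monotonicity can be applied uniformly to $O$- and $P$-extensions along $\apos_1$ (its hypothesis $\answers(\apos') \subseteq \answers(\apos_2)$ is exactly what the inductive hypothesis supplies), and that the transition from $\answers(\apos_1) \subseteq \answers(\apos_2)$ to $\pans(\apos_1) = \pans(\apos_2)$ is justified by the maximality of $P$-cliques guaranteed by completeness. Part (ii), by contrast, reduces cleanly to $\icoh$-preservation once the incoherence witness between $\atv_1$ and $\atv_2$ is exhibited.
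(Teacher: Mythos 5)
Your proof follows essentially the same route as the paper's: the same induction on prefixes using monotonicity and completeness to establish well-definedness, and the same appeal to $\icoh$-preservation for injectivity (you usefully make explicit why two distinct maximal $O$-cliques must contain a strictly incoherent pair, a step the paper leaves implicit). The one small caveat is in the inductive step for an $O$-move extension: this should not be attributed to monotonicity, which constrains the Player's answer moves, but follows instead directly from the hypothesis $\oans(\apos_1)=\oans(\apos_2)$ --- which is exactly how the paper handles the even-length case; with that adjustment your argument coincides with the paper's.
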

\begin{proof}
    To prove uniqueness, let $\apos, \bpos \in \astr$ such that $\oans(\apos) = \oans(\bpos)$.  We will show that $\pans(\apos) = \pans(\bpos)$.  We will show by induction on the prefixes $\apos' \preceq \apos$ that $\pans(\apos') \subseteq \pans(\bpos)$. 
    The base case is immediate.  We proceed by cases based on the parity of $\apos'$.
    \begin{itemize}
    \item If $\apos'$ is of odd length.  By response and determinacy, there is a unique $\amove$ such that $\pref{\astr} \ni \apos' \amove$ and $\apos' \amove \preceq \apos$.  By monotonicity,  $\pans(\apos' \amove) \subseteq \pans(\bpos)$.
    \item If $\apos'$ is of even length.  Then $\apos' \amove \preceq \apos$ is such that $\oans(\apos' \amove) \subseteq \oans(\bpos)$ and $\pans(\apos' \amove) = \pans(\apos') \subseteq \pans(\bpos)$. 
    \end{itemize}
    By completeness condition on maximal positions, $\apos$ has complete cliques of answers; so, we deduce that $\pans(\apos) = \pans(\bpos)$.

     To prove $\fn{\astr}$ is $1-1$, we show that if $\atv \not= \btv$, then $\fn{\astr}(\atv) \not= \fn{\astr}(\btv)$.
Let $\atv \not= \btv$.  Then, there are $\{ \apos \amove, \apos' \amove' \} \in \pref{\astr}, \amove \icoh \amove'$ be such that $\apos \amove \preceq \bpos, \oans(\bpos) = \atv, \oans(\bpos') = \atv$.  Since $\astr$ preserves incoherence, we deduce that $\pans(\bpos) \not=\pans(\bpos')$, yielding the result.
\end{proof}

We consider strategies modulo extensional equivalence.  In this definition, two partial functions are equal if their graphs coincide.  
\begin{defn}[Equivalence of Strategies]
Let $\astr, \bstr: \agame$.  Then $\astr \approx\ \bstr$ if 
$\fn{\astr} = \fn{\bstr}$. 
\end{defn}  

In anticipation of future sections, we visualize a strategy as a matrix $\mat{\astr}$, that we draw as:
\[
\left[
\begin{array}{c|c|c|c}
    & \btv_1 & \cdots & \btv_m \\ \hline
\atv_1 & x_{11} & \cdots & x_{1m} \\ \hline
\vdots & \vdots & \ddots & \vdots \\ \hline
\atv_n & x_{n1} & \cdots & x_{nm}
\end{array}
\right]
\]
Rows are indexed by the basis elements of $\aod$ ($\atv_1 \ldots \atv_n$ are maximal $O$-cliques); columns are indexed by the basis elements of $\apd$ (maximal $P$-cliques $\btv_1 \ldots \btv_m$).  This matrix represents a linear transformation from $\aod$ to $\apd$, and has a special form for the current strategies; no row or column has more than one non zero-entry.   If a maximal $O$-clique is not in the domain, its row has all $0$'s.  

We record a simple observation about strategies. 
\begin{lemma}[Multiplication by scalars]
If $\astr: \agame$, then for any $\ac$,$\ac\ \astr: \agame$.  $\mat{\ac\astr} = \ac\ \mat{\astr}$.  
\end{lemma}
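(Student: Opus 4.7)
The plan is to verify the two claims directly from the definitions, exploiting the fact that all conditions defining a strategy are stated in terms of the \emph{support} of $\astr$ (i.e., the set of complete plays with nonzero scalar) together with its prefix closure, rather than in terms of the scalar values themselves.

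First I would unpack $\ac\,\astr$ using the notational convention set up in Section~2, where an element $v\in\Hi{X}$ is identified with the set $\{(\bc,x)\mid \bc\neq 0,\ \bc = \langle v,x\rangle\}$. Under this identification, for $\ac\neq 0$ we have $(\bc,\apos)\in \ac\,\astr$ iff $(\bc/\ac,\apos)\in\astr$, so $\ac\,\astr$ and $\astr$ have exactly the same set of complete plays, and therefore the same prefix closure $\pref{\ac\,\astr}=\pref{\astr}$. The degenerate case $\ac=0$ yields the empty set, and one checks directly that the empty subset of $\Hi{\aposnmax}$ vacuously satisfies all three strategy axioms.

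Next I would verify the three defining conditions of a strategy for $\ac\,\astr$ in the nontrivial case $\ac\neq 0$. \emph{Determinacy} quantifies over positions $\apos\in\pref{\ac\,\astr}$ of even length and asks for a unique response move; since $\pref{\ac\,\astr}=\pref{\astr}$, this is inherited verbatim from the determinacy of $\astr$. \emph{Monotonicity} is phrased entirely in terms of the answer sets of positions lying in the support of the strategy and their prefixes; again, since rescaling by a nonzero scalar does not disturb which positions belong to the support, monotonicity for $\ac\,\astr$ is the same statement as monotonicity for $\astr$. \emph{Incoherence preservation} is likewise a condition on prefixes and extensions lying in $\pref{\cdot}\cap\aposnmax$, so it transfers along the identity of supports.

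Finally, for the matrix identity, I would appeal to Definition~\ref{stratFN} and the matrix presentation introduced just before the lemma. The $(\atv,\btv)$-entry of $\mat{\astr}$ is the scalar $\bc$ such that $(\bc,\apos)\in\astr$ with $\oans(\apos)=\atv$ and $\pans(\apos)=\btv$, and $0$ otherwise. Scaling $\astr$ by $\ac$ replaces each such $\bc$ by $\ac\bc$ and leaves zeros fixed, which is exactly the scalar multiplication $\ac\,\mat{\astr}$ on matrices. No step is really an obstacle here; if anything, the only point that requires care is remembering to treat $\ac=0$ separately, since the Hilbert-space-to-subset identification strips off zero coefficients and the argument "same support" would otherwise be vacuous rather than informative.
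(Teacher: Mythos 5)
Your proposal is correct and matches the argument the paper intends: the lemma is recorded there as a ``simple observation'' with no proof given, and the point is exactly the one you make, namely that all three strategy axioms are conditions on the support (equivalently on $\pref{\astr}$), which is unchanged by rescaling, while the matrix entries are the scalars themselves. Your explicit handling of the degenerate case $\ac=0$ (where the support becomes empty and the axioms hold vacuously under the paper's literal definitions) is a reasonable extra precaution that the paper glosses over.
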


\subsection{The category $\bgg$}
The category $\bgg$ has:
\begin{description}
    \item[Objects: ] Games $\agame$
    \item[Morphisms: ]  $\astr: \agame \rightarrow \bgame$ if $\astr: \agame \linimpl \bgame $. Modulo $\approx$
\end{description}

\subsubsection{Identity}
The identity is given by a copycat strategy~\cite{AJ94}.  We use primed notation on the games to distinguish different copies of the same game. 
\[ \id = \{ (1, \apos) \in \agame \linimpl \agame' \mid \apos \rest \agame = \apos \rest \agame' \} \]

The figure below shows the prefix of a typical play and the function of the strategy.

\begin{minipage}{.20\textwidth}
\centering
\begin{tabular}{lll}
$\agame$  &  $\linimpl$ & $\agame $  \\ \hline
&&                       $\amove$   \\ \hdashline
$\amove$ &&  \\ \hdashline
$\bmove$ && \\  \hdashline
&&                       \bmove   \\
\end{tabular}
\end{minipage}
\begin{minipage}{.5\textwidth}
\centering
\begin{tikzpicture}
    \draw (0,0) rectangle (4,3);
    
    \draw[->] (1,4) -- (1,3) node[near start, right] {$\apdp$};
    \draw[->] (3,4) -- (3,3) node[near start, right] {$\aod$};
    
    \draw[->] (1,0) -- (1,-1) node[midway, right] {$\aod$};
    \draw[->] (3,0) -- (3,-1) node[midway, right] {$\apd$};
    
    \draw[->] (1,3) -- (3,0) ;
    \draw[->] (3,3) -- (1,0);
\end{tikzpicture}
\end{minipage}%
\hfill
\begin{minipage}{.25\textwidth} 
\centering
\begin{alignat*}{3}
\fn{\id} & =  \apd \tensor \aodp  \\
& \xrightarrow{\id \tensor \id} \apdp \tensor \aod  \\
& \xlongrightarrow{{\tt symm}} \aod{} \tensor \apdp
\end{alignat*}
\end{minipage}

We sketch how to verify that the strategy satisfies the requirements.  
Player response only depends on the prior Opponent move; so, $\id$ is history-free.  For all $\apos \in  \pref{\id} $ of even length, 
\[ \apos \rest \agame' = \apos \rest \agame  \]
For all $\apos \in  \pref{\id} $ of odd length,
\[  \apos \rest \agame' \preceq \apos \rest \agame  \ \ \mbox{ or } \ \ \ 
    \apos \rest \agame \preceq \apos \rest \agame'
\]

To show that $\id$ satisfies monotonicity, consider $\apos \amove \in \pref{\id}$  of odd length $\apos \amove \preceq \bpos \in \pref{\id} \cap \aposnmax$.  We consider the case when $\amove \in \agame$ (the case when $\amove \in \agame'$ is symmetric and omitted).

So, $\apos \amove \amove' \in \pref{\id}$, where we use $\amove'$ for the copy of $\amove \in \agame'$.  Then:
\[ \pans(\apos \amove \amove') \rest \agame = \pans(\apos)  \rest \agame \subseteq \pans(\bpos \rest \agame) \]
and
\[ \pans(\apos \amove \amove') \rest \agame' \subseteq \pans(\apos)  \rest \agame' \cup \{ \amove' \} \subseteq \oans(\apos \amove \rest \agame) \subseteq \oans(\bpos \rest \agame) = \pans(\bpos \rest \agame')
\]

To show that $\id$ preserves $\icoh$,  consider $\apos \amove, \bpos \bmove \in \pref{\id}$  of odd length such that $\amove \icoh \bmove$.  Thus, by definition of coherence relation for linear implication, $\amove, \bmove$ are in the same component; furthermore, their copies  $\amove', \bmove'$  in the other component are such $\amove' \icoh \bmove'$.  Result follows since $\apos \amove \amove', \bpos \bmove \bmove' \in \pref{\id}$.

\subsubsection{Composition}
We first describe composition as a solution to a constraint on plays. 
\begin{defn}[Composition]
Let $\astr: \agame \linimpl \bgame$ (resp, $\bstr: \bgame \linimpl \cgame$).   Then, 
\[ \astr ; \bstr = (\astr \pcomp \bstr)\rest A,C \]
where
\[ \astr \pcomp \bstr  = \{ (\ac \bc, \apos) \in  (\amoves \cup \bmoves \cup \cmoves)^{\star} \mid (\ac, \apos) \rest \agame,\bgame \in \astr, (\bc ,\apos )\rest \bgame,\cgame \in \bstr \} \}
\]
\end{defn}

To establish that composition preserves the required properties of strategies, we use the proofs of Propositions 1 of~\citet{AJ94}, and Proposition 2.1 of~\citet{AJMPCF}.  In order to establish a more precise relationship with those papers,
 we observe that:
\[ \pi_2 (\astr \pcomp \bstr) =  \maximal{(\pref{\astr} \pcomp \pref{\bstr})} \]
where
\[ \pref{\astr} \pcomp \pref{\bstr} =  \{ \apos \in  (\amoves \cup \bmoves \cup \cmoves)^{\star} \mid \apos \rest \agame,\bgame \in \astr, \apos \rest \bgame,\cgame \in \bstr \}\]
In particular, We use a key observation from~\citet{AJ94}.    For any $\apos \in \pref{\astr; \bstr}$, there is a unique associated $\apos' \in \pref{\astr \pcomp  \bstr}$ such that $\apos' \rest \agame, \cgame = \apos$.  This shows that there is an iterative way to build up the solution to the constraint outlined in the definition of composition.

\begin{lemma}\label{comp}
Let $\astr \approx \astr': \agame \linimpl \bgame, \bstr \approx \bstr': \bgame \linimpl \cgame$.  Then:
\begin{itemize}
    \item $\astr; \bstr: \agame \linimpl \cgame$ 
    \item $\astr; \bstr \approx \astr'; \bstr': \agame \linimpl \cgame$
\end{itemize}
\end{lemma}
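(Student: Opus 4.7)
The plan is to treat the two bullets in turn, using the parallel-composition-plus-hiding framework of \citet{AJ94, AJMPCF} for the routine part and focusing on the three axiom refinements particular to this paper: $\icoh$-preservation, the monotonicity axiom on partial functions, and the scalar decoration of plays. For part 1, I will verify each axiom for $\astr;\bstr$ by systematically lifting along the unique-lift observation stated immediately before the lemma. For part 2, I will argue that $\fn{\astr;\bstr}$ depends only on $\fn{\astr}$ and $\fn{\bstr}$, whence extensional equivalence propagates through composition.

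Determinacy is standard and reduces directly to Proposition~1 of \citet{AJ94} and Proposition~2.1 of \citet{AJMPCF}; scalars contribute only a multiplicative weight and do not affect the combinatorics of moves. For monotonicity, fix an odd-length $\apos\amove \in \pref{\astr;\bstr}$ and a maximal $(\ac\bc,\bpos) \in \astr;\bstr$ with $\answers(\apos) \subseteq \answers(\bpos)$. I lift both to $\pref{\astr \pcomp \bstr}$, obtaining $\apos'$ and a complete $\bpos'$ whose restrictions to $\agame,\bgame$ and $\bgame,\cgame$ lie in $\astr$ and $\bstr$ respectively. The visible move $\amove$ lifts to a finite chain of internal $\bgame$ exchanges followed by a single visible answer in $\agame$ or $\cgame$; stepping along this chain and invoking monotonicity of $\astr$ or $\bstr$ at each internal transition, all accumulating Player answers remain within the corresponding projection of $\answers(\bpos')$, giving $\answers(\apos\amove) \subseteq \answers(\bpos)$.

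For $\icoh$-preservation, take two composite prefixes ending with strictly incoherent O-answers $\amove \icoh \amove'$ along with their complete extensions, and lift everything to the parallel composition. The inverted polarity of $\bgame$ on the two sides means each visible O-answer in the composite is an O-answer for exactly one of $\astr$ or $\bstr$; applying $\icoh$-preservation of that strategy produces an extending prefix with strictly incoherent P-answers. If they surface in $\agame$ or $\cgame$, we are done; if they land in $\bgame$, they re-enter the other strategy as incoherent O-answers and trigger another application. The main obstacle is proving that this ``ping-pong'' cannot hide incoherence indefinitely: I plan to argue that since the lifted plays are finite, every internal question is eventually answered, and two complete plays whose visible parts agree but whose interiors carry incoherent $\bgame$-answers would give incompatible O-inputs to the receiving strategy, contradicting its determinacy; hence incoherence must surface visibly before the composite play terminates.

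Finally, for extensional equivalence, given $\astr \approx \astr'$ and $\bstr \approx \bstr'$, I will show $\fn{\astr;\bstr} = \fn{\astr';\bstr'}$ by observing that $\fn{\astr;\bstr}(v_A, u_C)$ for an input O-clique $(v_A, u_C) \in \apd \tensor \cod$ is determined by a joint iteration: one alternately feeds $\bgame$-polarity data through $\fn{\astr}$ and $\fn{\bstr}$ until the complete $A, B, C$ cliques of the lifted play are pinned down, and the output scalar is the product of the scalars returned at each step. Since the hypotheses give $\fn{\astr} = \fn{\astr'}$ and $\fn{\bstr} = \fn{\bstr'}$, both iterations coincide stepwise, so the composite partial functions agree and $\astr;\bstr \approx \astr';\bstr'$.
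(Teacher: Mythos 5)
Your proposal is correct in outline and follows the paper's strategy for the first bullet almost exactly: lift to the parallel composition via the unique-witness observation, maintain the answer-inclusion invariant inductively for monotonicity, and iterate $\icoh$-preservation through the hidden component for incoherence. Two points of divergence are worth flagging. First, your termination argument for the $\icoh$ ping-pong leans on a determinacy contradiction (``incompatible O-inputs to the receiving strategy''), which is not quite the right mechanism --- determinacy constrains P-responses at a given position, not which O-inputs a strategy may receive across different plays. The paper's argument is simpler and is already contained in the other half of your sentence: each application of $\icoh$-preservation yields a strictly longer prefix of the two fixed finite complete plays, so the process terminates, and it can only terminate when the incoherent P-answers land in $\agame$ or $\cgame$. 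Second, for the $\approx$ bullet the paper does \emph{not} go through the function characterization; it directly constructs, for each complete $\apos \in \astr \pcomp \bstr$, a matching complete play $\bpos \in \astr' \pcomp \bstr'$ by induction, choosing O-moves consistent with $\answers(\apos)$ via the Extension axiom on games and letting monotonicity of $\astr',\bstr'$ preserve the invariant $\answers(\bpos) \subseteq \answers(\apos)$. Your route --- arguing that $\fn{\astr;\bstr}$ is determined by $\fn{\astr}$ and $\fn{\bstr}$ alone --- is essentially the content of the subsequent trace lemma (Lemma~\ref{trace}), and to make it rigorous you would need the same witness-building induction (including the appeal to Extension, which you leave implicit) to show the joint iteration is well-defined and unique. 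So your plan works, but it front-loads a result the paper proves separately afterwards, whereas the paper's direct construction keeps this lemma self-contained.
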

\begin{proof}

The proof that $\astr; \bstr$ satisfies Determinacy follows the proof of Proposition 1 of~\citet{AJ94}, and is omitted.

\paragraph*{Monotonicity. }  Let $\apos \in \pref{\astr; \bstr}$.  Let $\bpos \in \maximal{\pref{\astr; \bstr}}$ such that $\answers(\apos) \subseteq \answers(\bpos)$.  Let $\apos' \in \pref{\astr \pcomp \bstr}$ and $\bpos' \in \pref{\astr \pcomp \bstr}$ be the corresponding witnesses, such that $\apos' \rest \agame,\cgame = \apos$ and $\bpos' \rest \agame,\cgame = \bpos$.  We prove that $\answers(\apos') \subseteq \answers(\bpos')$ by induction on $\apos'$.
Our starting point is with $\epsilon$ where both sets of answers are empty.   
\begin{itemize}
\item When $O$ moves in $\agame,\cgame$, result follows by assumption $\answers(\apos) \subseteq \answers(\bpos)$. 
\item When the next move is by Player, the monotonicity assumption on $\astr, \bstr$, ensure that their responses maintains the invariant.
\end{itemize}

\paragraph*{$\icoh$— preservation.}  We establish a slightly more general result by the showing the result for  positions from $\pref{\astr \pcomp \bstr}$, i.e. Let $ \amove \icoh \amove', \aPO(\amove) = \aPO(\amove') =  OA, \apos \amove \preceq \bpos, \apos'\amove'\preceq \bpos' \in \pref{\astr \pcomp \bstr}$, where $\bpos,\bpos' \in \maximal{\pref{\astr \pcomp \bstr}}$.  

In the coherence relation for linear implication, incoherent moves have to be in the same component.  In particular, $\amove,\amove'$  are locked together in one of $\agame$ or $\bgame$ or $\cgame$.  

Using the incoherence of $\astr,\bstr$, we discover extensions  $\apos\amove \preceq \bpos_1\bmove \preceq \bpos $  and $\apos'\amove'\bpos'_1\bmove' \preceq \bpos'$, such that $\bmove \icoh \bmove'$.  There are two possibilities.
\begin{itemize}
    \item If $\bmove,\bmove'$ are in $\agame$, we are done.  Similarly, if both are in $\cgame$.   
    \item  Otherwise, both moves are in $\bgame$.  In this case, repeat the argument with $\bpos_1\bmove$ and $\bpos'_1\bmove'$.  
\end{itemize}  
The result follows, since the play eventually exits $\bgame$.

\paragraph*{$\approx$. }
Let $\apos \in \astr \pcomp \bstr$.  Let $\astr \approx \astr', \bstr \approx \bstr'$.  We build a witness $\bpos \in \astr' \pcomp \bstr'$ inductively, always maintaining the invariant:
\[ \answers(\bpos) \subseteq \answers(\apos) \]
Our starting point is with $\epsilon$.  The rules we follow are:
\begin{itemize}
\item When $O$ is to move in $\agame,\cgame$, choose a move that results in a position that is consistent with $\answers(\apos)$.  This is always possible by the extensibility axiom on games.  
\item When the next move is by Player, the monotonicity assumption on $\astr, \bstr$, ensure that their responses maintains the invariant. 
\end{itemize}
Since the completed position $\bpos$ also satisfies the invariant, we deduce that $\answers(\apos) = \answers(\bpos)$ and $\astr; \bstr \approx \astr';\bstr'$.

\end{proof}
The above proof incorporates ideas from both the ``sum'' and ''product'' flavors of GOI~\citep{haghverdi_geometry_2010}. Whereas, preservation of incoherence is in the spirit of the ''sum''-version; the history-sensitivity of strategies makes the computation of the fixed point to yield the result to have the ''product'' flavor.

The composition of the function of strategies can be calculated using partial trace operators~\citet{SamsonTraces,Joyal_Street_Verity_1996}\footnote{We encourage the specialists in game semantics to compare to~\citet{McC10}.}as per the following figure, writing $\fn{\astr};\ fn{\bstr}$ for the result.  The game constraints ensure that the witness for the composition of functions is unique!

\begin{minipage}{\textwidth}
\centering
\begin{tikzpicture}
    \begin{scope}
    \draw (0,0) rectangle (4,3);
    \node at (1.5,1.5) {{\Large $\fn{\astr}$}};
    
    \draw[->] (1,4) -- (1,3) node[near start, right] {$\apd$};
    \draw[->] (3,4) -- (3,3) node[near start, right] {$\bod$};

    \draw[->] (1,0) -- (1,-1) node[midway, right] {$\aod$};
    \draw[->] (3,0) -- (3,-1) node[midway, right] {$\bpd$};
    
    \draw[->] (1,3) -- (3,0) ;
    \draw[->] (3,3) -- (1,0);
     \end{scope}
    
    \begin{scope}[xshift=8cm] 
         \draw (0,0) rectangle (4,3);
          \node at (1.5,1.5) {{\Large $\fn{\bstr}$}};
    
    \draw[->] (1,4) -- (1,3) node[near start, right] {$\bpd$};
    \draw[->] (3,4) -- (3,3) node[near start, right] {$\cod$};
    
    \draw[->] (1,0) -- (1,-1) node[midway, right] {$\bod$};
    \draw[->] (3,0) -- (3,-1) node[midway, right] {$\cpd$};
    
    \draw[->] (1,3) -- (3,0) ;
    \draw[->] (3,3) -- (1,0);
    \end{scope}
    
    \draw  plot [smooth, tension=1] coordinates {(9,-1) (7,-1) (5,4)(3,4)};
    \draw  plot [smooth, tension=1] coordinates {(3,-1) (5,-1) (7,4)  (9,4)};
\end{tikzpicture}
\end{minipage}

\begin{lemma}\label{trace}
$\fn{\astr; \bstr} = \fn{\astr}; \fn{\bstr} $
\end{lemma}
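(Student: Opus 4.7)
The plan is to unfold both sides of the equation in matrix/coordinate form and compare them entrywise, relying on the uniqueness-of-interaction-witness observation that was used to justify the definition of $\astr;\bstr$.

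First, I would fix the coordinate description of each side. For $\astr: \agame \linimpl \bgame$, the matrix $\mat{\astr}$ has rows indexed by $O$-cliques of $\agame \linimpl \bgame$, namely pairs $(v,w)\in\apd\tensor\bod$, and columns indexed by $P$-cliques, namely pairs $(v',w')\in\aod\tensor\bpd$; the entry at $((v,w),(v',w'))$ is the scalar $\ac$ attached to the unique $\apos\in\astr$ with $\oans(\apos)=(v,w)$ and $\pans(\apos)=(v',w')$ if such a position exists, and $0$ otherwise (this uses Definition~\ref{stratFN} together with the $1{-}1$ lemma). Analogously for $\mat{\bstr}$ on $\bpd\tensor\cod \to \bod\tensor\cpd$. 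Writing out the trace formula from the preliminaries, the $((v,z),(v',z'))$-entry of $\fn{\astr};\fn{\bstr}$ is
\[
\sum_{w\in\bod,\,w'\in\bpd} \mat{\astr}_{(v,w),(v',w')}\cdot \mat{\bstr}_{(w',z),(w,z')} .
\]

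Next, I would interpret each nonzero term of this sum in game-theoretic language. A term is nonzero precisely when there exist $(\ac,\apos_1)\in\astr$ and $(\bc,\apos_2)\in\bstr$ with $\oans(\apos_1)=(v,w)$, $\pans(\apos_1)=(v',w')$, $\oans(\apos_2)=(w',z)$, $\pans(\apos_2)=(w,z')$. The matching condition on the $\bgame$-cliques is precisely what is needed to glue $\apos_1,\apos_2$ into a candidate interaction sequence $\apos\in(\amoves\cup\bmoves\cup\cmoves)^\star$ witnessing $(\ac\bc,\apos)\in\astr\pcomp\bstr$, with $\apos\rest\agame,\cgame$ having $O$-answers $(v,z)$ and $P$-answers $(v',z')$. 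Conversely, any element of $\astr\pcomp\bstr$ with those boundary answers restricts to such a pair $(\apos_1,\apos_2)$, and the scalar on the interaction is $\ac\bc$ by definition of $\pcomp$.

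The key step, and the main obstacle, is to show that for each fixed $(v,z),(v',z')$ the sum has at most one nonzero term, whose value is exactly $\fn{\astr;\bstr}(v,z)$ for $(v',z')$. This is where the interaction-uniqueness observation recorded just before Lemma~\ref{comp} enters: for each $\apos\in\pref{\astr;\bstr}$ there is a \emph{unique} associated $\apos'\in\pref{\astr\pcomp\bstr}$ with $\apos'\rest\agame,\cgame=\apos$. Applied to a complete position with boundary $(v,z)\mapsto(v',z')$, this pins down a unique pair of $\bgame$-cliques $(w,w')$, so all other summands vanish. Moreover the scalar attached by Definition of $\astr;\bstr$ is exactly the $\ac\bc$ coming from the surviving pair.

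Finally I would assemble the entry-by-entry comparison: both sides have, at position $((v,z),(v',z'))$, the scalar associated with the unique complete interaction (if any) projecting to the corresponding $O$- and $P$-cliques in $\agame\linimpl\cgame$, and $0$ otherwise. Hence the two matrices are equal, yielding $\fn{\astr;\bstr}=\fn{\astr};\fn{\bstr}$. The bookkeeping of which copy of $\apd$, $\bod$, $\bpd$, $\cod$ enters the tensor factors at each step, and the use of the symmetries implicit in the trace diagram, is the only genuinely delicate part.
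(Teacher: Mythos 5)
Your overall plan --- write both sides out entrywise, show the trace sum has at most one nonzero term, and identify that term with the scalar of the unique interaction --- has the same shape as the paper's argument. But there is a genuine gap at the step where you pass from a nonzero summand back to an element of $\astr \pcomp \bstr$. You assert that matching answer cliques in $\bgame$ is ``precisely what is needed to glue $\apos_1,\apos_2$ into a candidate interaction sequence.'' It is not: to produce $\apos$ with $\apos\rest\agame,\bgame=\apos_1$ and $\apos\rest\bgame,\cgame=\apos_2$ you need the two positions to project to the \emph{same sequence} of moves in $\bgame$, interleaved so as to respect alternation and the switching conditions of both implication games; agreement of the answer \emph{sets} in $\bgame$ gives none of that. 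Moreover, distinct positions in $\astr$ can carry the same clique data (Opponent may reorder his moves), so there is no canonical pair $(\apos_1,\apos_2)$ to glue, and an arbitrary compatible pair will in general fail to interleave. Your uniqueness step inherits the same problem: ruling out spurious nonzero summands requires knowing which compatible clique pairs $(\bts,\bts')$ are actually realized by interactions, which is exactly the claim left unproved.

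The paper closes this gap by \emph{constructing} the interaction rather than gluing: starting from $\epsilon$, Opponent moves in $\agame$ and $\cgame$ are chosen, using the \emph{Extension} axiom on games, so as to stay consistent with the target cliques $\ats_1,\ats_2$, while every Player move (including all traffic in $\bgame$, each move of which is a Player move for one of $\astr,\bstr$) is supplied by the strategies themselves; \emph{Monotonicity} maintains the invariant $\answers(\apos)\subseteq \ats_1\cup\ats'_1\cup\ats_2\cup\ats'_2\cup\bts\cup\bts'$, and Completeness forces equality at the end. This induction is the real content of the lemma: it shows every compatible pair of $\bgame$-cliques is witnessed by an interaction, and combined with the uniqueness of the interaction witness for a given external position (the observation before Lemma~\ref{comp}) and the well-definedness of $\fn{\astr;\bstr}$, it collapses the trace sum to the single intended term. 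You need to supply this construction (or an equivalent argument) before your entrywise comparison goes through; the rest of your bookkeeping is fine.
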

\begin{proof}
We are going to prove that
\[  \fn{\astr; \bstr} =  \trace{\bpd,\bod}{\apd\tensor\cod,\aod\tensor\cpd} (\fn{\astr} \tensor \fn{\bstr})\]
The above lemma shows that there are  $\ats_1 \in \aod, \ats_2 \in \cod$ such that:
\[ 
\begin{array}{ll}
\fn{\astr;\bstr}(\ats_1 \tensor \ats_2) =& \ac\bc (\ats'_1 \tensor \ats'_2), \mbox{ where } (\exists \bts' \in \bod,\bts \in \bpd) \\
& \ \ \ \ \  \fn{\astr}(\ats_1 \tensor \bts') =  \ac (\ats'_1 \tensor \bts), \\
& \ \ \ \ \ \fn{\bstr}(\ats_2 \tensor \bts')  = \bc(\ats'_2 \tensor \bts)
\end{array}
\]

Below, we show that the function is well-defined, by showing that there is a unique $\bts,\bts'$ in the above by showing that any values that satisfy the requirements are witnessed by a trace $\apos \in \pref{\astr \pcomp \bstr}$ such that $\answers(\apos) \subseteq \ats_1 \cup \ats'_1 \cup \ats_2 \cup \ats'_2 \cup \bts \cup \bts'$ that we build inductively.

Our starting point is with $\epsilon$.  The rules we follow are:
\begin{itemize}
\item When $O$ is to move in $\agame$, choose a move that results in a position that is consistent with  $\ats_1$.  
When $O$ is to move in $\cgame$, choose a move that results in a position that is consistent with  $\ats_2$. This is always possible by the extensibility axiom on games.  
\item When the next move is by Player, the monotonicity assumption on $\astr, \bstr$, ensure that their responses maintains the invariant.   
\end{itemize}
When $\apos$ is completed, it continues to satisfy the invariant, so $\answers(\apos) = \ats_1 \cup \ats'_1 \cup \ats_2 \cup \ats'_2 \cup \bts \cup \bts'$. 

\end{proof}

As a corollary, since $\fn{\astr; \bstr} = \fn{\astr}; \fn{\bstr}$ is $1-1$, we deduce that
\begin{lemma}\label{trinv}
 $\inv{\fn{\astr; \bstr}} = \inv{\fn{\bstr}} ; \inv{\fn{\astr}}$  is a (partial) $1-1$ function.
 \end{lemma}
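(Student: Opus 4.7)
The plan is to exploit what has already been proved and reduce the statement to two essentially formal observations. First I would recall that the lemma preceding the composition lemma shows $\fn{\astr}$ is a $1$-$1$ partial function for every strategy $\astr$, so in particular $\fn{\astr; \bstr}$ is a $1$-$1$ partial function, and hence its relational converse $\inv{\fn{\astr;\bstr}}$ is automatically a $1$-$1$ partial function from its range to its domain. This disposes of the ``$1$-$1$ partial function'' claim provided we can establish the equality.

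For the equality, I would apply Lemma~\ref{trace} to rewrite $\fn{\astr;\bstr}$ as the traced composite $\fn{\astr};\fn{\bstr}$, and then argue that on $1$-$1$ partial functions the converse of a composite is the composite of the converses in reverse order. Concretely, if $(\bts'_1 \tensor \bts'_2, \ats_1 \tensor \ats_2)$ lies in $\inv{\fn{\astr;\bstr}}$, then by Lemma~\ref{trace} there is a unique pair $(\bts,\bts')$ with $\fn{\astr}(\ats_1 \tensor \bts') = \ac(\bts'_1 \tensor \bts)$ and $\fn{\bstr}(\ats_2 \tensor \bts) = \bc(\bts'_2 \tensor \bts')$ (using the scalar $\ac\bc$); inverting each of these partial functions shows exactly that $(\bts'_1 \tensor \bts'_2, \ats_1 \tensor \ats_2) \in \inv{\fn{\bstr}} ; \inv{\fn{\astr}}$, and the argument runs in reverse because every intermediate value is forced by injectivity of $\fn{\astr}$ and $\fn{\bstr}$.

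The only genuine subtlety, and what I expect to be the main obstacle, is checking that no spurious elements appear in $\inv{\fn{\bstr}} ; \inv{\fn{\astr}}$ that are absent from $\inv{\fn{\astr;\bstr}}$: one might worry that running the traced composite ``backwards'' could pick out a witness $\bts$ in $\bpd$ that is not matched by any compatible $\bts'$ in $\bod$. This is precisely where the uniqueness clause in the proof of Lemma~\ref{trace} is needed, together with the injectivity of $\fn{\astr}$ and $\fn{\bstr}$: given the boundary data, the intermediate cliques $\bts,\bts'$ are uniquely determined in both directions, so the forward and backward traversals witness the same pairs. Once this is spelt out, the equality $\inv{\fn{\astr;\bstr}} = \inv{\fn{\bstr}} ; \inv{\fn{\astr}}$ follows, and the $1$-$1$ partial function claim is inherited from step one.
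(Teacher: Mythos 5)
Your proposal is correct and matches the paper's own treatment: the paper likewise obtains this lemma as an immediate corollary of Lemma~\ref{trace} together with the fact that $\fn{\astr;\bstr}$ is a $1$-$1$ partial function, so that taking converses reverses the (traced) composite. Your extra discussion of the uniqueness of the intermediate witnesses $\bts,\bts'$ simply spells out what the paper leaves implicit in calling it a corollary.
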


\subsection{Strategies for Multiplicatives}\label{smccmor}

The witnesses for the symmetric monoidal structure are:
\begin{eqnarray*}
\twist &=& \{ (1, \apos) \in \agame_1 \tensor \agame_2 \linimpl \agame'_1 \tensor \agame'_2 \mid \apos \rest \agame_1 = \apos \rest \agame'_2, \apos \rest \agame_2 = \apos \rest \agame'_1 \} \\
\assoc &=& \{ (1, \apos) \in \agame_1 \tensor (\agame_2 \tensor \agame_3) \linimpl (\agame'_1 \tensor \agame'_2)  \tensor \agame'_3 \mid \apos \rest \agame_i = \apos \rest \agame'_i, i \in 1 \ldots 3  \} \\
\assoc^{-1} &=& \{ (1, \apos) \in (\agame_1 \tensor \agame_2) \tensor \agame_3 \linimpl \agame'_1 \tensor (\agame'_2  \tensor \agame'_3) \mid \apos \rest \agame_i = \apos \rest \agame'_i, i \in 1 \ldots 3  \} 
\end{eqnarray*}

All the above strategies are history-free copycat strategies like $\id$.  Their functions are similarly constructed from identity functions.  

Let $\astr_i: \agame_i \linimpl \agame'_i$, $i=1,2$.  Then:
\[ 
\astr_1 \tensor \astr_2 = \{ (\ac_1\ac_2, \apos) \in \agame_1 \tensor \agame_2 \linimpl (\agame'_1 \tensor \agame'_2)  \mid (\ac_i, \apos \rest \agame_i \linimpl \agame'_i \in \astr_i, i=1,2 \}
\]
$\astr_1 \tensor \astr_2$ inherits monotonicity and $\icoh$-preservation from $\astr_i$.  
Thus:
\[ \fn{\astr_1 \tensor \astr_2} = \fn{\astr_1} \tensor \fn{\astr_2} \ \ \ \mbox{and} \ \   \mat{\astr \tensor \bstr}  = \mat{\astr} \tensor \mat{\bstr}  \]

The witnesses for the closed structure are as follows.  Let $\sigma: \agame_1 \tensor \agame_2 \linimpl \agame_3$. 
\begin{eqnarray*}
\app &=& \{ (1, \apos) \in (\agame_1 \linimpl \agame_2) \tensor \agame'_1 \linimpl \agame'_2 \mid \apos \rest \agame_i = \apos \rest \agame'_i, i \in 1 \ldots 2 \} \\
\Lambda(\astr) &=& \{ (\ac, \apos) \in \agame_1 \linimpl (\agame_2 \linimpl \agame_3)  \mid (\exists (\ac, \bpos) \in \astr) [ \apos \rest \agame_i = \bpos \rest \agame_i, i=1\ldots 3]  \}
\end{eqnarray*}
$\Lambda(\astr)$ is merely a rebracketing of $\astr$, reflecting the introductory comments on the first-order nature of game semantics.  It inherits  monotonicity and $\icoh$-preservation from $\astr$ and:
\[ \fn{\Lambda{\astr}} = \fn{\astr}, \ \ \ \mbox{and} \ \ \mat{\Lambda(\astr)}  = \mat{\astr} \]

$\app$ is a copy cat; the constraints on positions ensures that the last move in the components when playing $\app$ happen in the order indicated as follows, with the very final move on the right-hand side.

\vspace{5pt}
\begin{minipage}{\textwidth}
\centering
\begin{tikzpicture}[node distance=1cm and 1cm]
    \node (a1) at (1,-1.0) {$\linimpl$};  
    \node (a) at (0,0) {$(\alpha\linimpl\beta) \tensor \alpha$};
    \node (b) at (-1,1.0) {$\alpha\linimpl\beta$};
    \node (d) at (-1.5,2.0) {$\alpha$};
    \node (e) at (-0.5,2.0) {$\beta$};
    \node (f) at (1,2.0) {$\alpha$};
    \node (g) at (3,2.0) {$\beta$};

    \draw (a1) -- (a);
    \draw (a) -- (b);
    \draw (a) -- (f);
    \draw (b) -- (d);
    \draw (b) -- (e);
    \draw (a1) -- (g); 

     \draw[-{Latex[length=3mm]}] (d) -- (e);
     \draw[-{Latex[length=3mm]}] (f) -- ++(0, 0.7) -| (d);
     \draw[-{Latex[length=3mm]}] (e) -- ++(0, 1.05) -| (g);
\end{tikzpicture}
\end{minipage}

$\fn{\app}$ follows the recipe for copycats alluded to above.

\subsection{Strategies for Additives}

\paragraph*{Product}
\begin{eqnarray*}
\pi_1 &=& \{ \apos \in (\agame \llwith \bgame)  \linimpl \agame'  \mid \apos \rest \agame_1 = \apos \rest \agame'_1 \} \\
\pi_2 &=& \{ \apos \in (\agame \llwith \bgame)  \linimpl \agame' \mid \apos \rest \agame_2 = \apos \rest \agame'_2 \} 
\end{eqnarray*}
Let $\astr: \agame \linimpl \cgame, \bstr: \bgame \linimpl \cgame$.  Then:
\begin{eqnarray*}
\pair{\astr}{\bstr} &=& \astr \uplus \bstr 
\end{eqnarray*}
\begin{eqnarray*}
\lltwist &=& \{ (1, \apos) \in \agame_1 \llwith \agame_2 \linimpl \agame'_1 \llwith \agame'_2 \mid \apos \rest \agame_1 = \apos \rest \agame'_2, \apos \rest \agame_2 = \apos \rest \agame'_1 \} \\
\llassoc &=& \{ (1, \apos) \in \agame_1 \llwith (\agame_2 \llwith \agame_3) \linimpl (\agame'_1 \llwith \agame'_2)  \llwith \agame'_3 \mid \apos \rest \agame_i = \apos \rest \agame'_i, i \in 1 \ldots 3  \} \\
\inv{\llassoc} &=& \{ (1, \apos) \in (\agame_1 \llwith \agame_2) \llwith \agame_3 \linimpl \agame'_1 \llwith (\agame'_2  \llwith \agame'_3) \mid \apos \rest \agame_i = \apos \rest \agame'_i, i \in 1 \ldots 3  \} 
\end{eqnarray*}
All except pairing are copycat strategies like $\id$.  So, their functions are similarly constructed from identity functions; e.g., see below for $\pi_1$ and pairing.

\begin{minipage}{.42\textwidth}
\renewcommand{\arraystretch}{1.3}
\centering
\[ \mat{\pi_1} \]
\[
\left[
\begin{array}{c|c|c}
        & \aod \tensor \apdp &  \aod \tensor \bpd \\ \hline
\apd \tensor \aodp & \mat{\id_{\agame}} & 0 \\ \hline
\apd \tensor \bpd  &   0                & 0
\end{array}
\right]
\]
\end{minipage}
\begin{minipage}{.42\textwidth}
\renewcommand{\arraystretch}{1.3}
\centering
\[ \mat{\pair{\astr}{\bstr}} \]
\[
\left[
\begin{array}{c|c|c}
        & \aod \tensor \cpd &  \bod \tensor \cpd \\ \hline
\apd \tensor \cod & \mat{\astr} & 0 \\ \hline
\apd \tensor \cod  &   0   & \mat{\bstr}          
\end{array}
\right]
\]
\end{minipage}
\vspace{10pt}

Pairing requires history, as we see from the beginning of a possible play in $\pair{\astr}{\bstr}$ below (assuming that $\bmove, \bmove',\amove,\amove' \in \pref{\astr}$). The last response of the strategy requires memory of the initial Opponent move to know which of $\astr,\bstr$ is being used.  

\begin{minipage}{\textwidth}
\centering
\begin{tabular}{lll}
$\agame  $ & $\linimpl$ & $\bgame \llwith \cgame $ \\ \hline
&&                 $\bmove$  \\ \hdashline
$\bmove'$ & &                    \\ \hdashline
$\amove$ &&                         \\ \hdashline
&&                       $\amove'$  \\
\end{tabular}
\end{minipage}

The distributivity strategies are copycats, albeit requiring history. 
\begin{defn}[$\llwith$-distributivity]
Let $\bgame = \agame_1 \linimpl (\agame_2 \llwith \agame_3)$ and let $\bgame' = (\agame'_1 \linimpl \agame'_2) \llwith (\agame'_1 \linimpl \agame_3) $
\begin{eqnarray*}
\distiw &=& \{ \apos \in \bgame \linimpl \bgame' \mid \apos \rest \agame_1, \agame_2 = \apos \rest \agame'_1, \agame'_2 \} \cup
\{ \apos \in \bgame \linimpl \bgame' \mid \apos \rest \agame_1, \agame_3 = \apos \rest \agame'_1, \agame'_3 \} \\
\inv{\distiw} &=& \{ \apos \in \bgame' \linimpl \bgame \mid \apos \rest \agame_1, \agame_2 = \apos \rest \agame'_1, \agame'_2 \} \cup
\{ \apos \in \bgame \linimpl \bgame' \mid \apos \rest \agame_1, \agame_3 = \apos \rest \agame'_1, \agame'_3 \} \\
\end{eqnarray*}
There is one way distributivity with tensor; $\agame \tensor (\bgame \llwith \cgame) \linimpl (\agame \tensor \bgame) \llwith (\agame_2 \tensor \bgame)$ defined as:
\[ \{ \apos \mid \apos \rest (\agame \tensor (\bgame \llwith \cgame) = \apos \rest  (\agame \tensor \bgame) \} \cup 
\{ \{ \apos \mid \apos \rest (\agame \tensor (\bgame \llwith \cgame) = \apos \rest  (\agame \tensor \cgame) 
\}
\]
\end{defn}

\paragraph*{Sum}
The strategies involving $\lplus$ have to account for the initial protocol.

\begin{defn}\label{summor}  
    \begin{eqnarray*}
    \injl &=& \{ (1, \perp\ l \ \apos) \in \agame \linimpl \agame' \lplus \bgame \mid \apos \rest \agame = \apos \rest \agame' \} \\
    \injr &=& \{ (1, \perp\ r \ \apos) \in \bgame \linimpl \agame \lplus \bgame' \mid \apos \rest \bgame = \apos \rest \bgame' \} 
    \end{eqnarray*}
    Let $\astr: \agame \linimpl \cgame, \bstr: \bgame \linimpl \cgame$.  Then:
    \begin{eqnarray*}
    \lsumstr{\astr}{\bstr} &=& \{ \amove \perp l \ \apos \mid \amove \apos \in \astr\} \cup \{ \amove \perp\ r \ \apos \mid \amove \apos \in \bstr\}
    \end{eqnarray*}
    \begin{eqnarray*}
    \ptwist &=& \{ (1, \apos) \in \agame_1 \lplus \agame_2 \linimpl \agame'_1 \lplus \agame'_2 \mid \apos \rest \agame_1 = \apos \rest \agame'_2, \apos \rest \agame_2 = \apos \rest \agame'_1 \} \\
\passoc &=& \{ (1, \apos) \in \agame_1 \lplus (\agame_2 \lplus \agame_3) \linimpl (\agame'_1 \lplus \agame'_2)  \lplus \agame'_3 \mid \apos \rest \agame_i = \apos \rest \agame'_i, i \in 1 \ldots 3  \} \\
\inv{\passoc} &=& \{ (1, \apos) \in (\agame_1 \lplus \agame_2) \lplus \agame_3 \linimpl \agame'_1 \lplus (\agame'_2  \lplus \agame'_3) \mid \apos \rest \agame_i = \apos \rest \agame'_i, i \in 1 \ldots 3  \} 
       \end{eqnarray*}
\end{defn}
All the above strategies except $\lsumstr{\cdot}{\cdot}$ are copycats, including on the important protocol bits (namely $l,r$) that choose the summand.  So, their functions are similarly constructed from identity functions. They all require history, in part to navigate the bits of protocol.  The beginning of a typical play in $\injl: \agame \linimpl \agame \lplus \bgame$ and in $[\astr,\bstr]:  \agame \lplus \bgame \linimpl  \cgame $  are described below. In $\lsumstr{\astr}{\bstr}$, after the prefix of the first three moves that chooses $l$, the play stays in $\astr$.  In the figure below,  we assume that $\amove\amove' \in \pref{\astr}$.  

\begin{minipage}{.5\textwidth}
\centering
\begin{tabular}{lll}
$\agame$  &  $\linimpl$ & $\agame\lplus   \bgame$ \\ \hline
&&                       $\perp$  \\ \hdashline
&&                       l        \\ \hdashline
&&                       \amove   \\ \hdashline
$\amove$ && \\ \hdashline
$\bmove$ && \\ \hdashline
&&                       \bmove   \\
\end{tabular}
\end{minipage}
\begin{minipage}{.5\textwidth}
\centering
\begin{tabular}{lll}
$\agame   \lplus      \bgame$ & $\linimpl$ & $\cgame $ \\ \hline
&&                 $\perp$  \\ \hdashline
$\perp$ & &                    \\ \hdashline
$l$ &&                         \\ \hdashline
$\amove$ && \\ \hdashline
&& $\amove'$  
\end{tabular}
\end{minipage}

\noindent 
The function for the injections (resp. sum) coincide with those for projections (resp. product).  Compare the figure below with earlier  figure for product.  Games differentiate the two additives via the Protagonist who chooses the component, thus the polarity prevents the identification of the additives.  

\begin{minipage}{.5\textwidth}
\renewcommand{\arraystretch}{1.5}
\centering
\[
\mat{\injl} \]
\[
\left[
\begin{array}{c|c|c}
        & \aod \tensor \apdp &  \aod \tensor \bpd \\ \hline
\apd \tensor \aodp & \mat{\id_{\agame}} & 0 \\ \hline
\apd \tensor \bpd  &   0                & 0
\end{array}
\right]
\]
\end{minipage}
\begin{minipage}{.5\textwidth}
\renewcommand{\arraystretch}{1.5}
\centering
\[
\mat{\lsumstr{\astr}{\bstr}}
\]
\[
\left[
\begin{array}{c|c|c}
        & \aod \tensor \cpd &  \bod \tensor \cpd \\ \hline
\apd \tensor \cod & \mat{\astr} & 0 \\ \hline
\apd \tensor \cod  &   0   & \mat{\bstr}          
\end{array}
\right]
\]
\end{minipage}

\vspace{5pt}

\begin{defn}[$\tensor-\lplus$ distributivity] \label{distmor} \hfill \\
Let $\bgame= \agame_1 \tensor (\agame_2 \lplus \agame_3)  $, $\bgame' = (\agame'_1 \tensor \agame'_2) \lplus (\agame'_1 \tensor \agame'_3) $
\begin{eqnarray*}
    \distts &=& \{ (1, \apos) \in \bgame \linimpl  \bgame'  \mid \apos \rest \agame_i = \apos \rest \agame'_i, i=1\ldots 3, \\
    && l \in \apos \rest \bgame \leftrightarrow l \in \apos \rest \bgame', r \in \apos \rest \bgame \leftrightarrow r \in \apos \rest \bgame' \} \\ 
    \distts^{-1} &=& \{ (1, \apos) \in \bgame' \linimpl  \bgame  \mid \apos \rest \agame_i = \apos \rest \agame'_i, i=1\ldots 3, \\
    && l \in \apos \rest \bgame \leftrightarrow l \in \apos \rest \bgame' , r \in \apos \rest \bgame \leftrightarrow r \in \apos \rest \bgame' \} \\
\end{eqnarray*}
\end{defn}
$\distts,\inv{\distts}$ are copycats on each component and on the protocol bits.   So, their functions are similarly constructed from identity functions. Both need history, as by the following plays. 
\begin{table}[ht]
\centering
\begin{tabular}{lllllll}
$\agame_1$&$\tensor$&$\agame_2\ \lplus \agame_3$ & $\linimpl$ & $\agame'_1 \tensor \agame'_2$& $\lplus$& $\agame'_1 \tensor \agame'_3$ \\ \hline
&&&&&$\perp$ & \\ \hdashline
&& \ \ \ \ \   $\perp$ & & &                   \\ \hdashline
&& \ \ \ \ \  \   $l$ &&   &&                      \\ \hdashline
&&&&& \ $l$  \\ \hdashline
&&&& $\amove$  \\ \hdashline
$\amove$ &&                    \\ \hdashline
$\amove'$ &&\\ \hdashline
&&&& $\amove'$  \\ \hdashline
&&&& \ \ \ \ \ \ \ \ \  $\bmove$ \\ \hdashline
&& \    $\bmove$ &&   &&                      \\ 
\end{tabular}
\end{table}
\begin{table}[ht]
\centering
\begin{tabular}{lllllll}
$\agame'_1 \tensor \agame'_2$& $\lplus$& $\agame'_1 \tensor \agame'_3$ & $\linimpl$ & $\agame_1$&$\tensor$&$\agame_2\ \lplus \agame_3$  \\ \hline
&&&&$\amove$ \\ \hdashline
&\ $\perp$&\\ \hdashline
&\ \ $l$ \\ \hdashline
$\amove$ \\ \hdashline
$\amove'$ \\ \hdashline
&&&&$\amove'$ \\ \hdashline
&&&&&& \ \ \ \ \ \ $\perp$ \\ \hdashline
&&&&&& \ \ \ \ \ \ \ $l$ \\ \hdashline
&&&&&& $\bmove$ \\ \hdashline
\ \ \ \ \ \ \ \ \  $\bmove$ \\
\end{tabular}
\end{table}

\subsection{Properties of $\bgg$}

\begin{lemma}
$(\bgg,\one,\tensor,\linimpl)$ is symmetric monoidal closed.  Sums are given by $\lplus$ and products are given by $\llwith$.  Tensor distributes over sum and linear implication distributes over product.  
\end{lemma}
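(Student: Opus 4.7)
The plan is to reduce each categorical axiom to an equation on the underlying (partial) functions of strategies, using Lemma~\ref{trace} which states $\fn{\astr;\bstr} = \fn{\astr};\fn{\bstr}$ and the equivalence $\approx$ which identifies strategies whose functions agree. Since the symmetric monoidal structure, products, coproducts, and distributivities are all witnessed either by copycat strategies (whose associated functions are built from identities and symmetries via tensor/trace) or by the canonical pairing and case combinators (whose matrices were computed explicitly in Section~\ref{smccmor}), the categorical equations will decompose into routine identities in the traced symmetric monoidal category of finite-dimensional Hilbert spaces.

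First I would handle the symmetric monoidal structure $(\bgg,\one,\tensor,\linimpl)$. The witnesses $\twist$, $\assoc$, $\assoc^{-1}$, and the unit coherences are history-free copycats, so their functions are symmetries and associators in the underlying Hilbert-space calculus. The coherence diagrams (pentagon, hexagon, triangle, $\twist^2 = \id$) then reduce, via Lemma~\ref{trace}, to the corresponding coherence laws for $\tensor$ on Hilbert spaces, which hold. For the closed structure, I would verify the adjunction $\bgg(\agame\tensor\bgame,\cgame) \iso \bgg(\agame, \bgame\linimpl\cgame)$ by showing that $\Lambda(-)$ and the composite $\app \circ (- \tensor \id)$ are mutually inverse: this follows because $\Lambda(\astr)$ is purely a rebracketing ($\fn{\Lambda(\astr)} = \fn{\astr}$), and $\app$ is a copycat whose function is the identity up to the symmetric-monoidal-closed isomorphism $\dom{\agame\tensor\bgame \linimpl \cgame} \iso \dom{\agame\linimpl(\bgame\linimpl\cgame)}$. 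Naturality in all arguments reduces again to the copycat character of $\app$ together with functoriality of $\tensor$ on strategy matrices (noted in Section~\ref{smccmor}: $\mat{\astr\tensor\bstr} = \mat{\astr}\tensor\mat{\bstr}$).

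Second I would verify the product universal property for $\llwith$. Given $\astr:\cgame\linimpl\agame$ and $\bstr:\cgame\linimpl\bgame$, the candidate mediator is $\pair{\astr}{\bstr} = \astr \uplus \bstr$. The equations $\pair{\astr}{\bstr};\pi_1 \approx \astr$ and $\pair{\astr}{\bstr};\pi_2 \approx \bstr$ follow by examining the composite's plays: after the initial Opponent move in $\agame$ (respectively $\bgame$), the strategy is locked into the corresponding summand of $\astr\uplus\bstr$, and the outer copycat $\pi_i$ just forwards moves. Uniqueness follows from the block-diagonal form of $\mat{\pair{\astr}{\bstr}}$ in Section~\ref{smccmor}. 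The coproduct universal property for $\lplus$ is entirely dual: the initial protocol moves $\perp l$ versus $\perp r$ in $\injl,\injr$ play the role that Opponent's choice of component plays in $\llwith$, and $\lsumstr{\astr}{\bstr}$ uses those protocol bits to select the branch. Here I would lean on the fact that the underlying Hilbert spaces of $\llwith$ and $\lplus$ coincide, so the matrix calculation for $\lsumstr{\astr}{\bstr}$ is identical to that for $\pair{\astr}{\bstr}$, and the needed triangles reduce to the same block-matrix identities.

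Finally I would verify the two distributivities. For $\distts: \agame\tensor(\bgame\lplus\cgame) \to (\agame\tensor\bgame)\lplus(\agame\tensor\cgame)$, one checks that $\distts$ and $\distts^{-1}$ are mutually inverse copycats: on each summand determined by the protocol bit $l$ or $r$, the composite is a copycat on each component game, hence equivalent to $\id$. Naturality of $\distts$ reduces to the observation that both sides copy the protocol bit faithfully. The $\linimpl$-over-$\llwith$ distributivity $\distiw$ is handled symmetrically using $\pi_1,\pi_2$ on the right and currying. The main obstacle I anticipate is \emph{bookkeeping around history and the protocol bits}: the strategies $\pair{\astr}{\bstr}$, $\lsumstr{\astr}{\bstr}$, and the distributivities $\distiw,\distts$ are all history-sensitive, so while their \emph{functions} are simple block-matrix combinations, the play-level verification that composition actually realizes those functions (e.g., that $\injl;\lsumstr{\astr}{\bstr} \approx \astr$ as genuine strategies, not just up to matrix equality) requires tracking the interleaving discipline. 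Fortunately, Lemma~\ref{comp} and Lemma~\ref{trace} let me discharge all such checks by comparing $\fn{-}$ on both sides modulo $\approx$, reducing every coherence to a finite-dimensional linear-algebra identity.
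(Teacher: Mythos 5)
Your proposal is correct and follows essentially the same route as the paper: the paper's proof simply points to the witnesses already constructed in the definitions of Sections on multiplicative and additive strategies and to the distributivity morphisms, and then invokes Lemma~\ref{comp} (together with the fact that $\approx$ identifies strategies with equal associated functions) to reduce every required equation to an identity on the functions $\fn{-}$, exactly as you do via Lemma~\ref{trace}. The only small item the paper makes explicit that you omit is that associativity of composition itself is imported from Proposition 2 of the cited prior work, but this does not affect the substance of your argument.
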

The associativity of composition follows from Proposition 2 of ~\citet{AJ94}.  
The witnesses required for the structure are given by definitions~\ref{smccmor},~\ref{summor}, and~\ref{distmor}. Lemma~\ref{comp} permits an easy verification of required equations by considering the functions associated with the strategies.  

\section{ The category $\vg$}\label{sec:vg}

Following the motivation in the introduction for considering sums, we introduce the category $\vg$ with morphisms as sums of strategies from $\bgg$.  The strategies from $\bgg$ thus provide a basis for the strategies in $\vg$.  

\begin{defn}[Category $\vg$]
The objects in $\vg$ are games.

$\avec: \vg(\agame, \bgame)$ is an element of $ \Hi{\posnmax_{\agame \linimpl \bgame}}$ such that:
            \[ (\exists \astr_i \in \bgg(\agame,\bgame) \ \avec  = \sum_i \ac_i \astr_i \]
\end{defn}
There is no reason for the witnesses to be unique in the above definition.    In the technical development of this and future section, all our definitions will continue to be phrased as before on '' strategies as sets of positions"; the existence of a witness for the decomposition into sums will be treated merely as an extra constraint that our definitions have to enforce.  

We will follow the convention of using $\avec, \bvec$ for sums of morphisms in $\bgg$, reserving $\astr,\bstr$ for individual morphisms from $\bgg$.  In general, the decomposition into sums is not unique.   We will only ever rely on the existence of \emph{some} decomposition. 

\begin{defn}
Given $\avec \in \Hi{\posnmax_{\agame}}$, define $\fn{\avec}$ as the unique linear  extension of 
\[ (\forall \atv \in \aod)  \fn{\avec}(\atv) = \sum \ac_i \btv_i \mid (\exists (\ac_i, \apos) \in \avec, \answers(\apos) = \atv \cup \btv  \]
\end{defn}
The function associated with $\avec$, as defined above, is independent of the choice of the decomposition of $\avec$ into morphisms of $\bgg$.   However, if $\avec  = \sum_i \ac_i \astr_i$, then
  \[ \fn{\avec} = \sum_i \ac_i \fn{\astr_i} \]

As before, we consider morphisms up to $\approx$-equivalence.
\begin{defn}
   Given $\avec,\bvec:  \vg(\agame,\bgame)$, define 
   $\avec \approx \bvec$ if $\fn{\avec} = \fn{\bvec}$
\end{defn}
From the results forthcoming in section~\ref{schwinger}, we will deduce that $\vg(\ndim,\ndim)$ and $\vg(\bool^n,\bool^n)$ realize all linear transformations.  

Before we define composition, we generalize restriction to the case of sums.   
\begin{defn}[Restriction]
Let $S \subseteq (\amoves)^{\star}$, $\bmoves \subseteq \amoves $.  Then
\[ S \rest \bmoves = \sum \ac\apos \mid (\exists (\ac,\bpos) \in S, \bpos = \apos \rest \bmoves \]
\end{defn}
Restriction, when used on \emph{sets of positions}, can cause positions to cancel because the scalars add to $0$.  This is a key feature of quantum phenomenon (see example~\ref{sqnot} below) that does not appear in the deterministic setting of the prior sections, where the above sum had only one summand.  

Composition is defined as before; we reiterate that this definition does not rely on a choice of witnesses.  
\begin{defn}[Composition]
Let $\avec: \agame \linimpl \bgame$, $\bvec: \bgame \linimpl \cgame$.   Then, 
\[ \avec ; \bvec = (\avec \pcomp \bvec)\rest A,C \]
where
\[ \avec \pcomp \bvec = \sum \ac \ac' \apos \mid \apos \in  (\amoves \cup \bmoves \cup \cmoves)^{\star},  \ac (\apos \rest \agame,\bgame) \in \avec, \ac'( \apos \rest \bgame,\cgame) \in \bvec 
\]
\end{defn}

Composition can also be defined in terms of particular witnesses. 
\begin{lemma}\label{sumcomp} If 
$ \avec  = \sum_i \ac_i \astr_i$ and $  \bvec  = \sum_j \bc_j  \astr_i$, then
\[ \avec ; \bvec = \sum_{i,j} \ac_i \bc_j\  \astr_i; \bstr_j \]
where the composition on the LHS is in $\vg$, and the composition on the right is in $\bgg$.
\end{lemma}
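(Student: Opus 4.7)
The plan is to show that both parallel composition $\pcomp$ (in its two arguments) and restriction respect linear combinations of $\bgg$-strategies, and then to chain these two facts with the definition $\avec; \bvec = (\avec \pcomp \bvec) \rest A, C$ to reduce the composition in $\vg$ to a sum of compositions in $\bgg$.

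First I would establish bilinearity of $\pcomp$ by comparing the coefficient of each position $\apos \in (\amoves \cup \bmoves \cup \cmoves)^{\star}$ on the two sides. From the $\vg$-level definition, the coefficient of $\apos$ in $\avec \pcomp \bvec$ is the product of the coefficient of $\apos \rest \agame, \bgame$ in $\avec$ with the coefficient of $\apos \rest \bgame, \cgame$ in $\bvec$. Expanding each coefficient via $\avec = \sum_i \ac_i \astr_i$ and $\bvec = \sum_j \bc_j \bstr_j$ and distributing the product of the two finite sums yields a double sum over $(i,j)$ whose summand, by the $\bgg$-level definition of $\astr_i \pcomp \bstr_j$ (each $\astr_i, \bstr_j$ being $\{0,1\}$-valued on their supports), is exactly the coefficient of $\apos$ in $\ac_i \bc_j (\astr_i \pcomp \bstr_j)$. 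This establishes $\avec \pcomp \bvec = \sum_{i,j} \ac_i \bc_j (\astr_i \pcomp \bstr_j)$.

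Next I would observe that restriction is a linear operation on the ambient space $\Hi{\cdot}$: the coefficient of each $\apos$ in $S \rest \bmoves$ is the sum of coefficients in $S$ over all positions that project to $\apos$, and this clearly commutes with scalar multiplication and finite addition. Chaining with the previous step,
\[ \avec; \bvec \;=\; (\avec \pcomp \bvec) \rest A, C \;=\; \left(\sum_{i,j} \ac_i \bc_j (\astr_i \pcomp \bstr_j)\right) \rest A, C \;=\; \sum_{i,j} \ac_i \bc_j \bigl((\astr_i \pcomp \bstr_j) \rest A, C\bigr) \;=\; \sum_{i,j} \ac_i \bc_j (\astr_i; \bstr_j), \]
where the last equality is the $\bgg$-level definition of composition. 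Each $\astr_i; \bstr_j$ is a legitimate $\bgg$-morphism by Lemma~\ref{comp}, so the right-hand side is a bona fide $\vg$-morphism.

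There is no genuinely hard step; the proof is a pure bookkeeping exercise in the scalars. The only mild point of care is that the coefficient of $\apos$ in $\avec \pcomp \bvec$ is a single product of two coefficients, so the double sum over $(i,j)$ arises from a single application of distributivity of multiplication over addition in $\complex$, with no double-counting.
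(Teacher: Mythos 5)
Your proof is correct and follows essentially the same route as the paper: reduce to bilinearity of $\pcomp$ together with linearity of restriction, the only cosmetic difference being that you verify bilinearity by direct coefficient comparison where the paper invokes linearity in one argument plus commutativity of $\pcomp$. One harmless inaccuracy: $\bgg$-strategies need not be $\{0,1\}$-valued on their supports (they are closed under scalar multiplication), but your coefficient bookkeeping goes through verbatim without that assumption.
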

\begin{proof}
It suffices to show that 
\[ \avec \pcomp \bvec =\sum_{i,j} \ac_i \bc_j \ \astr_i \pcomp \bstr_j \]
where
\[ \avec =  \sum_i \ac_i \astr_i,  \ \ \ \   \bvec = \sum_i \bc_j \bstr_j \]
which follows by noting the bi-linearity and commutativity properties of $\pcomp$.
\[ \avec \pcomp \sum_j \bc_j \bvec_j = \sum_j \bc_j (\avec \pcomp \bvec_j)  \]
and
\[ \avec \pcomp \bvec =  \bvec \pcomp \avec\]
\end{proof}

\begin{lemma}\label{vgtrace}
$\fn{\avec; \bvec} = \fn{\avec}; \fn{\bvec} $
\end{lemma}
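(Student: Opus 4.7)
The plan is to reduce this to the deterministic case already handled by Lemma~\ref{trace}, using the bilinearity provided by Lemma~\ref{sumcomp} and the linearity of the map $\avec \mapsto \fn{\avec}$.

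First, I would pick decompositions into deterministic witnesses: write $\avec = \sum_i \ac_i \astr_i$ and $\bvec = \sum_j \bc_j \bstr_j$ with $\astr_i \in \bgg(\agame,\bgame)$ and $\bstr_j \in \bgg(\bgame,\cgame)$. By the remark following the definition of $\fn{\avec}$, this gives
\[
\fn{\avec} = \sum_i \ac_i \fn{\astr_i}, \qquad \fn{\bvec} = \sum_j \bc_j \fn{\bstr_j}.
\]
Since ordinary composition of linear maps (equivalently, the trace-based composition used in the diagram preceding Lemma~\ref{trace}) is bilinear in each argument, this already yields
\[
\fn{\avec};\fn{\bvec} \;=\; \sum_{i,j} \ac_i \bc_j \, \bigl(\fn{\astr_i};\fn{\bstr_j}\bigr).
\]

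Next, I would apply Lemma~\ref{sumcomp} to the left-hand side to get $\avec;\bvec = \sum_{i,j} \ac_i \bc_j\, (\astr_i;\bstr_j)$, which lies in $\vg(\agame,\cgame)$ by construction. Applying linearity of $\fn{\cdot}$ again,
\[
\fn{\avec;\bvec} \;=\; \sum_{i,j} \ac_i \bc_j \, \fn{\astr_i;\bstr_j}.
\]
Now each summand is a composition of deterministic strategies from $\bgg$, so Lemma~\ref{trace} gives $\fn{\astr_i;\bstr_j} = \fn{\astr_i};\fn{\bstr_j}$ term by term. Substituting yields the desired equality.

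The only point that deserves a line of care is that the identity $\fn{\avec;\bvec} = \sum_{i,j}\ac_i\bc_j\,\fn{\astr_i;\bstr_j}$ is independent of the choice of decomposition; but this is immediate because $\fn{\avec;\bvec}$ is defined directly from the set $\avec;\bvec$ of scalar-weighted positions, and Lemma~\ref{sumcomp} exhibits \emph{some} presentation of this set as the claimed sum. Thus no genuine obstacle arises: the only nontrivial input is Lemma~\ref{trace}, and the rest is bookkeeping via bilinearity. One subtlety worth flagging explicitly in the write-up is that restriction can cause cancellation of positions with coefficient zero (as noted in the definition of Restriction); this is harmless here because both the left- and right-hand sides of the equation are defined as linear functions on the Hilbert space $\aod$, so any apparent cancellations at the level of strategies are reflected identically on both sides.
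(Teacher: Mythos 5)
Your proof is correct and follows essentially the same route as the paper's: decompose via Lemma~\ref{sumcomp}, apply Lemma~\ref{trace} termwise to the deterministic summands, and reassemble using linearity of $\fn{\cdot}$ and of the trace-based composition. The extra remarks on independence of the decomposition and on cancellation under restriction are sensible but do not change the substance.
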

\begin{proof}
Under the hypothesis of lemma~\ref{sumcomp},
\begin{alignat*}{3}
\fn{\avec ; \bvec} &= \sum_{i,j} \ac_i \bc_j \fn{\astr_i; \bstr_j}  \\
&= \sum_{i,j} \ac_i \bc_j \fn{\astr_i}; \fn{\bstr_j} \\
&= \sum_{i,j} \fn{\ac_i\astr_i}; \fn{\bc_j \bstr_j} \\
&= (\sum_i\fn{\ac_i\astr_i}) ; (\sum_j \fn{\bc_j \bstr_j})  \ \ \ \  \mbox{    Linearity of trace} \\
&= \fn{\avec}; \fn{\bvec}
\end{alignat*}
\end{proof}

While the definitions of composition in $\vg$ is similar to the deterministic categories, quantum effects arise from the interference across summands.   
\begin{example}\label{sqnot}[$\sqrt{{\tt NOT}}; \sqrt{{\tt NOT}} = {\tt NOT}$]
As anticipated in the introduction, sums permit us to split the symmetry of all monoidal structures.  We demonstrate for the tensor product below.  Let $\ac = \frac{1}{2}( 1 +i)$.  Consider
\[ \astr = \sqrt{{\tt NOT}}: \bool \linimpl \bool  =  \ac\ \id + \conj{\ac} \ {\tt NOT} \]
Thus:
\[ \astr = \{ (\ac, \perp \perp \true\ \true), 
              (\ac, \perp \perp \false\ \false),
               (\conj{\ac}, \perp \perp \false\ \true), 
                (\conj{\ac}, \perp \perp \true\ \false) \}
\]
We are going to compute: 
\[ \bool_1 \xlongrightarrow{\displaystyle{\astr}} \bool_2  \xrightarrow{\displaystyle{\astr}} \bool_3\]
$\astr \pcomp \astr$ is
\[
\begin{array}{lll}
&\{ &  (\ac\ac, \perp_3 \perp_2  \perp_1 \true_1\  \true_2\ \true_3), 
              (\ac\ac, \perp_3 \perp_2 \perp_1 \false_1\ \false_2\ \false_3), \\
    && (\ac\conj{\ac},  \perp_3 \perp_2 \perp_1 \true_1\ \false_2\ \false_3)
               (\ac\conj{\ac},  \perp_3 \perp_2 \perp_1 \false_1\ \true_2\ \true_3), \\
    &&  (\conj{\ac}\conj{\ac}, \perp_3 \perp_2 \perp_1 \true_1\ \false_2\ \true_3), 
               (\conj{\ac}\conj{\ac}, \perp_3 \perp_2 \perp_1 \false_1\ \true_2\ \false_3) \\
    &&(\ac\conj{\ac},  \perp_3 \perp_2 \perp_1 \true_1\ \true_2\ \false_3)
               (\ac\conj{\ac},  \perp_3 \perp_2 \perp_1 \false_1\ \false_2\ \true_3 \\
    &\}& 
\end{array} 
\]
Restricting out the middle $\bool_2$ to compute $\astr;\astr$ yields:
\[ \astr ; \astr = \{ (\ac\ac +\conj{\ac}\conj{\ac} , \perp_3 \perp_1 \true_1\true_3), 
              (\ac\ac +\conj{\ac}\conj{\ac} , \perp_3 \perp_1 \ \false_1 \false_3),
              (2 \ac\conj{\ac} ,  \perp_3  \perp_1 \true_1 \false_3),
              (2 \ac\conj{\ac} ,  \perp_3  \perp_1 \false_1\true_3)
               \}
\]
Since $\ac^2 +(\conj{\ac})^2 = 0$, and $2 \ac\conj{\ac} = 1$ we are left with:
\[ \astr ; \astr = \{
              (1,  \perp_3 \perp_1 \true_1\ \false_3),
              (1,  \perp_3  \perp_1 \false_1\ \true_3)
               \} = {\tt NOT}
\]
\end{example}

By standard considerations, $\vg$ essentially has (non-empty) biproducts,   So, we focus on $\lplus$ from now on.

\begin{lemma}
$\vg$ is symmetric monoidal-closed with sums where tensor distributes over sum.
\end{lemma}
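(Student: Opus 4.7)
The plan is to lift the symmetric monoidal closed structure and the sum structure from $\bgg$ to $\vg$ by extending every operation $\complex$-bilinearly over the linear combinations that define $\vg$-morphisms. Every structural witness --- $\id$, $\twist$, $\assoc$ and unitors, $\app$, $\Lambda$, $\injl$, $\injr$, $\lsumstr{\cdot}{\cdot}$, $\distts$, $\inv{\distts}$ --- already lives in $\bgg$ and hence in $\vg$ as the singleton sum $1\cdot\astr$. For operations on morphisms I would define $\avec\tensor\bvec := \sum_{i,j}\ac_i\bc_j(\astr_i\tensor\bstr_j)$ and $\Lambda(\avec) := \sum_i\ac_i\Lambda(\astr_i)$ for any decompositions $\avec=\sum_i\ac_i\astr_i$, $\bvec=\sum_j\bc_j\bstr_j$, and similarly for copairing $\lsumstr{\avec}{\bvec}$.

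The main obstacle is \emph{well-definedness} of these liftings at the level of $\approx$-classes, since decompositions into $\bgg$-strategies are not unique. The uniform remedy is to route everything through the extensional map $\fn{\cdot}$: since $\fn{\astr\tensor\bstr}=\fn{\astr}\tensor\fn{\bstr}$ and $\fn{\Lambda(\astr)}=\fn{\astr}$ were recorded in Section~\ref{smccmor}, and since $\fn{\avec}=\sum_i\ac_i\fn{\astr_i}$ holds for any decomposition, the proposed liftings induce the well-defined functions $\fn{\avec}\tensor\fn{\bvec}$ and $\fn{\Lambda(\avec)}=\fn{\avec}$, independent of the decomposition. As $\approx$ on $\vg$ is by definition equality of associated functions, well-definedness and functoriality fall out simultaneously, and bifunctoriality of $\tensor$ is immediate from bilinearity.

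With well-definedness secured, each categorical equation --- naturality of $\twist$, the pentagon and triangle, $\beta\eta$ for $\app/\Lambda$, universality of $\lsumstr{\cdot}{\cdot}$ against $\injl,\injr$, naturality of $\distts$, and compatibility of $\distts$ with associators/symmetries --- reduces by Lemma~\ref{sumcomp} and Lemma~\ref{vgtrace} to the corresponding equation in $\bgg$: both sides of any claimed identity expand as sums indexed by products of $\bgg$-decompositions, each summand is an equation already established in $\bgg$, and the scalars match by $\complex$-bilinearity. Distributivity of $\tensor$ over $\lplus$ uses the $\distts$ witness of Definition~\ref{distmor}, whose inverse relations were verified in $\bgg$ and therefore transfer verbatim; the zero morphism, needed for the (non-empty) biproduct structure mentioned in the remark preceding the lemma, is the empty sum $0$. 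The overall picture is that $\vg$ is the free $\complex$-linear extension of $\bgg$ compatible with $\approx$, so the 2-categorical content of $\bgg$ is preserved en bloc.
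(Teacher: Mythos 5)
Your proposal is correct and takes essentially the same route as the paper: the structural witnesses are all inherited from $\bgg$ as deterministic copycats, the operations $\tensor$, $\Lambda(\cdot)$, $\pair{\cdot}{\cdot}$, $\lsumstr{\cdot}{\cdot}$ are extended to sums, and the categorical equations reduce to the $\bgg$ case via Lemma~\ref{sumcomp} and Lemma~\ref{vgtrace}. The only presentational difference is that the paper defines these operations directly on the weighted sets of positions (so no decomposition is ever chosen and well-definedness is automatic, with bilinearity then proved to show the result still decomposes into $\bgg$-strategies), whereas you define them via a chosen decomposition and discharge well-definedness through $\fn{\cdot}$ --- the two routes coincide because each operation acts locally on positions.
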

\begin{proof}
$\bgg$ is a sub-category of $\vg$.  The required witnesses for identity, symmetry, associativity, injections, and distributivity (including weak distributivity) of the monoidal structures, are all deterministic copy cats, and inherited from $\bgg$.

$\Lambda(\cdot), \pair{\cdot}{\cdot}, \lsumstr{\cdot}{\cdot}, \tensor$ are defined similarly to $\bgg$. 
\begin{eqnarray*}
\avec_1 \tensor \avec_2 &=& \{ (\ac_1\ac_2, \apos) \in \agame_1 \tensor \agame_2 \linimpl (\agame'_1 \tensor \agame'_2)  \mid (\ac_i, \apos \rest \agame_i \linimpl \agame'_i \in \avec_i, i=1,2 \} \\
\Lambda(\avec) &=& \{ (\ac, \apos) \in \agame_1 \linimpl (\agame_2 \linimpl \agame_3)  \mid (\exists (\ac, \bpos) \in \avec) [ \apos \rest \agame_i = \bpos \rest \agame_i, i=1\ldots 3]  \} \\
\lsumstr{\avec}{\bvec} &=& \{ \amove \perp l \ \apos \mid \amove \apos \in \avec\} \cup \{ \amove \perp\ r \ \apos \mid \amove \apos \in \bvec\} 
\end{eqnarray*}
These are linear in each argument, e.g.
\begin{eqnarray*}
{\mathlarger{\mathlarger{\Lambda}}} (\ac_i \sum_i \avec_i) &=& \sum_i \ac_i \Lambda(\avec_i) \\
\left[\sum_i \ac_i \avec_i,\sum_j \bc_j \bvec_j \right] &=& \sum_{i,j} \ac_i \bc_j \lsumstr{\avec_i}{\bvec_j} \\
\left(\sum_i \ac_i \avec_i) \tensor ( \sum_j \bc_j \bvec_j \right)  &=& \sum_{i,j} \ac_i \bc_j  (\avec_i \tensor \bvec_j) 
\end{eqnarray*}
\end{proof}

\paragraph*{Discussion }

The category $\vg$ is closely related to the one derived from the GOI-construction on finite-dimensional vector spaces with tensor products, as discussed in ~\citet{ABRAMSKY20031}. Specifically, the composition of functions on strategies within $\vg$ aligns with their definition of composition. This makes $\vg$ a refinement of their category, specifically tailored to games. While their original category is compact-closed, $\vg$ does not exhibit $\star$-autonomy due to its restriction to games initiated by the Opponent. Moreover, $\vg$ distinguishes between the two multiplicatives based on considerations of the direction of information flow, which are inherent to the game structure.

This refined approach is exploited in subsequent sections, where we focus $\vg$ on reversible and unitary strategies. Within these discussions, the uniqueness of composition witnesses, guaranteed by lemma~\ref{trace} and lemma~\ref{trinv}, allow us to demonstrate that reversibility and unitarity properties are conserved under composition.

\section{Reversibility}

We consider a subcategory of reversible morphisms in $\bgg$.  Recall that ``reversible computing does not necessarily mean computations are invertible''~\citep{HEUNEN2015217}.  So, we do not demand that $\astr: \bgg(\agame,\bgame)$ is an isomorphism in $\bgg$.  Rather, we proceed as follows.

\begin{defn}
$\astr:\bgg(\agame, \bgame)$ is reversible if 
\[ \adj{\fn{\astr}} = \inv{\fn{\astr}} \]
\end{defn}

Notable non-reversible morphisms are the injections and projections, as we can deduce from cardinality constraints.  Many of the other strategies that we have seen in $\bgg$ are reversible.    These include:
\begin{itemize}
\item Identity
\item Symmetry, Associativity of all monoidal structures
\item Distributivities, including weak distributivities
\item If $\astr,\bstr$ are reversible, so are $\astr \tensor \bstr, \pair{\astr}{\bstr}, \lsumstr{\astr}{\bstr}, \Lambda(\astr)$
\end{itemize}

Reversible morphisms are closed under composition.  We show that
\[ \adj{\fn{\astr;\bstr}} = \inv{\fn{\astr;\bstr}} \]
appealing to  lemma~\ref{trace} below in the second step.
\begin{alignat*}{2}
& \fn{\astr;\bstr}(\ats_1 \tensor \ats_2) = \ac\bc\  \ats'_1 \tensor \ats'_2  \\
\Longleftrightarrow & (\exists \ac',\bc') \ (\exists \bts' \in \bod,\bts \in \bpd) \fn{\astr}(\ats_1 \tensor \bts') =  \ac' (\ats'_1 \tensor \bts), \fn{\bstr}(\ats_2 \tensor \bts') =   \bc'(\ats'_2 \cup \bts), \ac\bc = \ac'\bc'  \\
\Longleftrightarrow & (\exists \ac',\bc') \ (\exists \bts' \in \bod,\bts \in \bpd) \inv{\fn{\astr}}(\ats'_1 \tensor \bts) =  \conj{\ac'} (\ats_1 \tensor \bts'), \inv{\fn{\bstr}}(\ats'_2 \tensor \bts) = \conj{\bc'}(\ats_2 \tensor \bts'), \ac\bc = \ac'\bc' \\
\Longleftrightarrow &  (\exists \ac',\bc') \ (\exists \bts' \in \bod,\bts \in \bpd) \adj{\fn{\astr}}(\ats'_1 \tensor \bts) =  \conj{\ac'} (\ats_1 \tensor \bts'), \adj{\fn{\bstr}}(\ats'_2 \tensor \bts) = \conj{\bc'}(\ats_2 \tensor \bts'), \ac\bc = \ac'\bc'  \\
\Longleftrightarrow & \conj{(\ac\bc)} \  \ats_1 \tensor \ats_2 = \adj{\fn{\bstr}}; \adj{\fn{\astr}}
\end{alignat*}
proving that:
\[ \adj{\fn{\astr;\bstr}} = \adj{\fn{\bstr}}; \adj{\fn{\astr}} \]
Result follows by noting:
\[
\inv{\fn{\astr;\bstr}} = \inv{\fn{\bstr}}; \inv{\fn{\astr}} 
= \adj{\fn{\bstr}}; \adj{\fn{\astr}}
\]

\subsection{The Schwinger basis}\label{schwinger}

We explore the coding of the Schwinger basis~\citet{Schwinger} (See ~\citet{Wheeler} for a survey of several options of unitary bases)  using reversible strategies.

As a preliminary exercise, we show that any 1-1 boolean operator is realizable.  

\begin{lemma}[Universality for Boolean operators] \label{perm}
Any $1-1$ operator on $\bool^n \linimpl \bool^n$ is realized by a strategy. 
\end{lemma}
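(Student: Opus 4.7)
The plan is to construct, for each $1$-$1$ function $f \colon \bool^n \to \bool^n$, an explicit history-sensitive strategy $\astr_f \in \bgg(\bool^n, \bool^n)$ whose induced partial function on maximal $O$-cliques agrees with $f$. The guiding idea is that $\astr_f$ must \emph{gather all $n$ input bits before committing to any output bit}. Concretely, after $O$'s opening query $\perp_{y_i}$ on the RHS copy of $\bool^n$, $P$ responds by asking the first LHS input question $\perp_{x_1}$; after $O$'s (forced) answer $b_{x_1}$, $P$ asks $\perp_{x_2}$, and so on until $P$ has posed $\perp_{x_1}, \dots, \perp_{x_n}$ and received all $n$ answers $b_{x_1}, \dots, b_{x_n}$. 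Only then does $P$ turn to the RHS and answer each outstanding (and each subsequently posed) output query $\perp_{y_j}$ with the value $f(b_{x_1},\dots,b_{x_n})_j$.

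Formally, $\astr_f$ is defined by $P$'s response at every even-length prefix of a play: if fewer than $n$ LHS questions have been asked, $P$ asks the next one in the fixed order $1, 2, \dots, n$; otherwise $P$ answers the most recently posed outstanding $\perp_{y_j}$ with the correct bit of $f$ applied to the now-known input. One then verifies move-by-move that every resulting sequence is a legal position in $\bool^n \linimpl \bool^n$, using the requirement that the restriction to each component $\bool_i$ is a valid prefix of a maximal position of $\bool_i$.

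The strategy axioms follow by direct inspection. \textbf{Determinacy} is immediate from the construction. \textbf{Monotonicity} holds because every $P$-answer $b_{y_j}$ is determined by input bits that are already fixed at the moment of answering, so its value is stable across every extending completion in $\astr_f$. \textbf{$\icoh$-preservation} is the only place where injectivity of $f$ is used: if two prefixes end in strictly incoherent $O$-answers, they must differ in the value of some input bit, so the full input vectors gathered in any maximal extensions within the strategy are distinct; by injectivity of $f$ the corresponding output vectors disagree in some bit, supplying the required strictly incoherent $P$-answers. By construction $\fn{\astr_f}$ realizes $f$ on basis elements.

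The main obstacle is the careful bookkeeping around the switching rules for tensor and implication: one must verify that $P$ is genuinely allowed to ask $\perp_{x_1}, \dots, \perp_{x_n}$ successively on the LHS rather than being forced by the game structure to answer the pending RHS query earlier. The key observation is that the polarity flip on the LHS of the implication turns each $\perp_{x_i}$ into a $P$-question, and that the maximal positions of the tensor $\bool \tensor \cdots \tensor \bool$ are precisely the interleavings $\perp_{x_{\pi(1)}} b_{x_{\pi(1)}} \cdots \perp_{x_{\pi(n)}} b_{x_{\pi(n)}}$ for some permutation $\pi$; in particular the sequential pattern used by $\astr_f$ restricts on each $\bool_i$ to the valid play $\perp_{x_i} b_{x_i}$, so all of $\astr_f$'s plays are legal. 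Once this is established the remaining verifications are routine.
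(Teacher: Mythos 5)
Your construction is exactly the paper's: after the first RHS query, interrogate all $n$ inputs on the LHS, then answer each RHS query from the stored input via $f$, with injectivity of $f$ supplying $\icoh$-preservation. The proposal is correct and matches the paper's proof, just with more explicit bookkeeping of the switching conditions.
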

\begin{proof}
Given a $1-1$ operator $f$ on $\bool^n$, we define $\astr$ such that $\fn{\astr}=f$ as follows.
\begin{itemize}
    \item After initial query on right-hand side, say in $i$'th component, query all the arguments (yielding say $\vec{b}$), produce an answer $\pi_i(f(\vec{b})$ in $i$'th component.
    \item For following queries, say in $j$'th component  on the right-hand side, use the history to respond immediately with $\pi_j(f(\vec{b})$ in $j$'th component.
\end{itemize}
Since the function is $1-1$, any $O$-established incoherence (necessarily in the games on the LHS) is witnessed by $P$-established incoherence (necessarily in the games on the RHS), at one of the components on the RHS.

\end{proof}

\paragraph*{Completeness: $\ndim$.} 
The prototypical $n$-dim vector space is $\ndim =\lplus_{0..n-1} \one$.  The Schwinger basis of unitary operators on $\ndim$ is given by the compositions of the following operators; drawn here below for $n=5$.  On the left is the one-step permutation operator, and in the right, $\omega$ is the fifth root of unity.  

\begin{tabular}{lll} \\
$ \left[
\begin{tabular}{l|l|l|l|l}
0 & 0 & 0 & 0 & 1 \\ \hline
1 & 0 & 0 & 0 & 0 \\ \hline
0 & 1 & 0 & 0 & 0 \\ \hline
0 & 0 & 1 & 0 & 0 \\  \hline
0 & 0 & 0 & 1 & 0 \\ 
\end{tabular}
\right]
$
&\hspace*{1in}&
$ \left[
\begin{tabular}{l|l|l|l|l}
1 & 0 & 0 & 0 & 0 \\ \hline
 0& $\omega$ & 0 & 0 & 0 \\ \hline 
0 & 0 & $\omega^2$ & 0 & 0 \\ \hline 
0 & 0 & 0& $\omega^3$  & 0 \\ \hline
0 & 0 & 0 & 0 & $\omega^4$ \\ 
\end{tabular} 
\right] $
\\
\end{tabular}

Lemma~\ref{perm} shows us how to code any $1-1$ function on basic datatype as a $\bgg$ strategy.  
The matrix on the left is realized by the $\bgg$ strategy:
\[ \ndim \xlongrightarrow{\displaystyle{[\id, \omega\ \id, \omega^2\ \id, \omega^3\ \id, \omega^4\ \id]}} \ndim \]

\paragraph*{Completeness. }
Consider $\bool^n \linimpl \bool^n$.    We describe how to build the required witnesses for the Schwinger basis in $\bgg$.  The matrices in this case are $2^n \times 2^n$.

The permutation matrices are $1-1$ truth tables on $n$-bits.  So, by lemma~\ref{perm}, we have an associated  $\bgg$ strategy.  

We build the required diagonal Schwinger matrix, using morphisms available in $\bgg$ as follows.  Let $\omega$ be $n$'th root of unity.  By induction on $j$, we build a strategy $\astr: \bgg(\bool^j, \bool^j)$ with the diagonal entries of $\mat{\astr}$ being $1, \omega, , \ldots, \omega^{j-1}$: 
\[
\left[
\begin{array}{c|c|c|c}
 1 & 0& \cdots & 0 \\ \hline
 0 & \omega& \cdots & 0 \\ \hline
 \vdots & \vdots& \ddots & \vdots \\ \hline
 0 & 0 & \cdots & \omega^{j-1}
\end{array}
\right]
\]
The base case, $j=1$  is realized by the strategy  $\astr_1$
\[ \{ (1, \perp\perp\true\ \true), (\omega,  \perp \perp  \false\ \false) \}\]
For the inductive step $j+1$, given $\astr_j: \bool^j \linimpl \bool^j$ from inductive hypothesis, build $\astr_{j+1}$.
\begin{eqnarray*}
\astr_{j+1} &:& \bool^{j+1} \\
&=&  (\one \lplus \one) \tensor \bool^j \\
&\xrightarrow{\displaystyle{\distts}} & (\one \tensor \bool^j) \lplus  (\one \tensor \bool^j) \\
&\xrightarrow{\displaystyle{\astr_j \tensor \omega^j\ \astr_j}}&  (\one \tensor \bool^j) \lplus  (\one \tensor \bool^j)  \\
&\xlongrightarrow{\displaystyle{\inv{\distts}}}&  (\one \lplus \one) \tensor \bool^j = \bool^{j+1}
\end{eqnarray*}

\subsection{Universality for reversible circuits}   
Given reversible $\astr: \bgg(A,A)$ , reversible $\cntrl \astr$ is given by:
\begin{eqnarray*}
\cntrl \astr    &:& \bool \tensor T \\
&=&  (\one \lplus \one) \tensor T \\
&\xrightarrow{\displaystyle{\distts}} & (\one \tensor T) \lplus  (\one \tensor T) \\
&\xrightarrow{\displaystyle{\astr\tensor \id}}&  (\one \tensor T) \lplus  (\one \tensor T)  \\
&\xlongrightarrow{\displaystyle{\inv{\distts}}}&  (\one \lplus \one) \tensor T
\end{eqnarray*}
Since the Toffoli gate is $ \cntrl (\cntrl \twist)$, this provides an alternate proof that reversible morphisms are universal for reversible circuits.

\section{The category $\qg$}

The category $\qg$ restricts to those strategies in $\vg$ whose functions are unitary, and  which can be decomposed as sums of reversible strategies. 
\begin{defn}[$\qg$]
 $\qg$ is the subcategory of $\vg$ with morphisms $\avec: \vg(\agame,\bgame))$ in $\qg$ if 
\[ (\exists \emph{ reversible } \astr_i \in \bgg(\agame,\bgame) \ \avec  = \sum_i \ac_i \astr_i \ \ \ \ \ 
\mbox{       and        } \ \  \ \ \
\adj{\fn{\avec}} = \inv{\fn{\avec}} \]
\end{defn}

Notable non-unitary strategies are the injections $\injl, \injr$ , since the functions of these strategies are not total in $\bgg$.  Consequently, $\qg$ does not have sums, and $\lplus$ is available only as monoidal structures.  Thus, it is perhaps more appropriate to term $\one \lplus \one$ as $\qbit$ (rather than $\bool$), in the context of $\qg$.  

Copycat strategies from $\bgg$ are morphisms in $\qg$ in the case where their functions are total and the dimensions of the Hilbert spaces of Opponent and Player answers are equal.   

A sufficient condition to verify the condition on dimensions is that the underlying formula is balanced, i.e., the number of positive and negative occurrences of every atom is equal.  An easy verification shows that this condition encompasses:
\begin{itemize}
\item Identity 
\item Symmetry, Associativity of all monoidal structures ($\tensor, \lplus$)
\end{itemize}
For the following strategies, an ad hoc check verifies that the dimension condition is satisfied.  
\begin{itemize}
\item Distributivity of $\tensor$ over $\lplus$)
\item Weak distributivity of $\linimpl$ over $\lplus$
\end{itemize}
Furthermore, if $\astr,\bstr$ are unitary, so are 
\[ \astr \tensor \bstr, \pair{\astr}{\bstr}, \lsumstr{\astr}{\bstr}, \Lambda(\astr) \]

Thus, $\qg$ has all the required ingredients to be symmetric monoidal closed with ($\tensor, \linimpl$); it also supports another symmetric monoidal structures ($\lplus$) and inherits associated distributivity (tensor over sum) from $\vg$.   To complete this picture, we verify closure under composition below.
\begin{lemma}
Let $\avec:\qg(\agame,\bgame), \bvec: \qg(\bgame,\cgame)$.  Then:  $\avec; \bvec: \qg(\agame,\bgame)$.
\end{lemma}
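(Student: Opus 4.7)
The plan is to check the two clauses of $\qg$-membership for $\avec;\bvec$ (reading the codomain in the statement as $\qg(\agame,\cgame)$): (i) a decomposition of $\avec;\bvec$ as a sum of reversible $\bgg$-strategies, and (ii) unitarity of $\fn{\avec;\bvec}$.

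For (i), pick decompositions $\avec = \sum_i \ac_i \astr_i$ and $\bvec = \sum_j \bc_j \bstr_j$ witnessing $\avec,\bvec\in\qg$, so each $\astr_i$ and $\bstr_j$ is reversible in $\bgg$. Lemma~\ref{sumcomp} gives
\[
\avec;\bvec \;=\; \sum_{i,j} \ac_i\bc_j\,(\astr_i;\bstr_j),
\]
and the calculation already recorded in the Reversibility section shows that reversible morphisms of $\bgg$ are closed under composition. Hence each $\astr_i;\bstr_j$ is reversible, yielding the desired decomposition.

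For (ii), I would apply Lemma~\ref{vgtrace} to rewrite $\fn{\avec;\bvec} = \fn{\avec};\fn{\bvec}$ (trace composition) and then verify that this is unitary. The pivotal step is the dagger-contravariance $\adj{(\fn{\avec};\fn{\bvec})} = \adj{\fn{\bvec}};\adj{\fn{\avec}}$ for trace composition. The Reversibility section contains a pointwise verification of this identity for individual reversible $\bgg$-morphisms; I would lift it to $\vg$ by appealing to the bilinearity of $\fn{\cdot}$, of the trace composition (Lemma~\ref{sumcomp} applied to the underlying $\bgg$-decompositions), and of the dagger. Given this contravariance, the unitarity hypotheses $\adj{\fn{\avec}} = \inv{\fn{\avec}}$ and $\adj{\fn{\bvec}} = \inv{\fn{\bvec}}$, together with the standard contravariance of inverse under composition (Lemma~\ref{trinv}), give $\adj{(\fn{\avec};\fn{\bvec})} = \inv{\fn{\bvec}};\inv{\fn{\avec}} = \inv{(\fn{\avec};\fn{\bvec})}$, as required.

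The main obstacle is lifting the dagger-contravariance identity from $\bgg$ to $\vg$: the $\bgg$ argument proceeds move-by-move along witnesses, while in $\vg$ one must keep track of scalars and of the interference between summands. The uniqueness clause of Lemma~\ref{trace} will be useful here, since it guarantees that the GoI trace witness is unique and hence that summing over $i,j$ does not introduce spurious cross-terms; once that is in place, all other steps are routine bilinear bookkeeping on top of the $\bgg$-level facts already in the paper.
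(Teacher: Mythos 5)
Your proposal is correct and follows essentially the same route as the paper: decompose $\avec,\bvec$ into reversible $\bgg$-summands, use Lemma~\ref{sumcomp} together with closure of reversibility under $\bgg$-composition for the decomposition clause, and establish $\adj{\fn{\avec;\bvec}} = \inv{\fn{\avec;\bvec}}$ via (anti)linearity of the dagger, bilinearity of composition, Lemma~\ref{trinv}, and the unitarity hypotheses on $\avec,\bvec$. The paper simply carries out your step (ii) as one inline chain of equalities on the summands $\fn{\astr_i;\bstr_j}$ rather than first isolating a dagger-contravariance identity for $\fn{\avec};\fn{\bvec}$ (so the "lifting obstacle" you flag dissolves into routine bilinear bookkeeping, exactly as you suspect), and it leaves your clause (i) implicit.
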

\begin{proof}
Let $ \avec = \sum_i \astr_i, i =1 \ldots m$, $\bvec = \sum_j \bstr_j, j =1 \ldots m$, where $\astr_i,\bstr_j$ are in $\bgg$.  

In the calculation below, we use linearity of $\adj{\cdot}$, bilinearity of composition, $\adj{(\fn{\astr_i})}= \inv{\fn{\astr_i}}$, $\adj{(\fn{\bstr_j})}= \inv{\bstr_j}$, and 

\begin{alignat*}{3}
\adj{\fn{\avec;\bvec}} &= \adj{(\sum_{i,j} \fn{\astr_i;\bstr_j}) } \\
&= \sum_{i,j} \adj{\fn{\astr_i;\bstr_j}} \\
&= \sum_{i,j} \inv{\fn{\astr_i;\bstr_j}} \ \ \ \ &  (\astr_i;\bstr_j, \mbox{ reversible } \in \bgg(\agame,\cgame)\\
&= \sum_{i,j} \inv{\fn{\bstr_j}};\inv{\fn{\astr_i}}; \ \ &(\astr_i;\bstr_j \mbox{ reversible } \in \bgg(\agame,\cgame) \\ 
&=  (\sum_j \inv{\fn{\bstr_j}}); (\sum_{i} \inv {\fn{\astr_i}}) \\
&= (\sum_j \adj{\fn{\bstr_j}}); (\sum_{i} \adj{\fn{\astr_i}});  &(\astr_i;\bstr_j \mbox{ reversible } \in \bgg(\agame,\cgame)   \\
&= \adj{\fn{\bvec}}; \adj{\fn{\avec}} \\
&= \inv{\fn{\bvec}}; \inv{\fn{\avec}}  \ \ \ \ (\avec,\bvec \in \qg) \\
&= \inv{\fn{\avec; \bvec}}  
\end{alignat*}
\end{proof}

\paragraph*{Universality.  }
We conclude this section by demonstrating universality in a couple of different ways.  

For our first proof, we appeal to section~\ref{schwinger}, where we showed that the Schwinger basis for $n$-ary Boolean is realized by morphisms in $\bgg$.  Since every unitary $n$-ary Qbit operation can be expressed as a (weighted) sum of Schwinger basis elements, we deduce universality of $\qg$ unitary $n$-ary Qbit operations.  

We explore an alternate proof of universality by exploring the relationship with ~\citet{10.1145/3632861}.  Consider $\R$, the subcategory of $\qg$ as follows:
\begin{itemize}
\item Objects coincide with $\qg$
\item $\astr: \R(\agame,\bgame)$ is an isomorphism in $\qg$
\end{itemize}
$\R$ includes:  
\begin{itemize}
\item Identity
\item Symmetry, Associativity of both monoidal structures ($\tensor, \lplus$)
\item Distributivity ($\tensor$ over $\lplus$)
\end{itemize}
Furthermore, if $\astr,\bstr$ are isomorphisms, so are 
\[ \astr \tensor \bstr, \lsumstr{\astr}{\bstr} \]
Thus, $\R$ is a $(\tensor,\lplus)$ rig-groupoid.  Furthermore, $\R$ includes 
$\sqrt{X}: \qg(\bool ,\bool)$ as defined by $\frac{1}{2} [ (1+i) \id + (1-i) \twist] $; and all single qbit rotations are available in $\qg$ since the morphisms in $\bgg$ are closed under scaling by $\ac$ such that $|\ac|=1$.  Thus, $\R$ satisfies all requirements of~\citet{10.1145/3632861}. 

$R$ does not include $\app$, $\Lambda(\astr)$, and weak distributivity, since they are not invertible.  So, $\R$ can be viewed as a “first order'' fragment of $\qg$.  In this first-order fragment, reversibility and isomorphisms are identified.   

\paragraph{Acknowledgements.  }   Radha Jagadeesan thanks Prakash Panangaden and Peter Selinger for several useful discussions.  

\bibliography{sample}
\end{document}